\title{Equations over free inverse monoids with idempotent variables}
\author[1]{Volker Diekert}
\author[2]{Florent Martin}
\author[3]{G\'eraud S\'enizergues}
\author[4]{Pedro V. Silva}
\affil[1]{\small FMI, Universit\"at Stuttgart, Universit\"atsstr. 38, D-70569 Stuttgart, Germany}
\affil[2]{	Fakult\"at f\"ur Mathematik, Universit\"at Regensburg, Universit\"atsstr.~31, 93040 Regensburg, Germany}
\affil[3]{LaBRI, Unit\'e Mixte de Recherche du C.N.R.S. Nr 5800, Universit\'e Bordeaux; 351, cours de la Lib\'eration,	F-33405 Talence Cedex, France}
\affil[4]{Centro de Matem\'{a}tica, Faculdade de Ci\^{e}ncias, Universidade do Porto, R.~Campo Alegre 687, 4169-007 Porto, Portugal}
\newenvironment{proof}{{\bf{Proof.}}}{\hfill$\Box$\\ \medskip}
\newcommand{\qed}{\hfill $\Box$}
\newcommand{\prref}[1]{\prettyref{#1}}
\newtheorem{theorem}{Theorem}[section]
\newtheorem{lemma}[theorem]{Lemma}
\newtheorem{proposition}[theorem]{Proposition}
\newtheorem{corollary}[theorem]{Corollary}
\newtheorem{definition}[theorem]{Definition}
\newtheorem{example}[theorem]{Example}
\newtheorem{remark}[theorem]{Remark}
\newcommand{\cE}{\mathcal{E}}
\newcommand{\cS}{\mathcal{S}}
\newcommand\last{\mbox{last}}
\def\Box{\square}
\def\tra#1{\smash{\mathop{\mid\kern
-1pt\joinrel\relbar\joinrel\relbar}\limits^{*}_{#1}}}
\def\longtra#1{\smash{\mathop{\mid\kern
-1pt\joinrel\relbar\joinrel\relbar\joinrel\relbar}\limits^{*}_{#1}}}
\def\vlongtra#1{\smash{\mathop{\mid\kern
-1pt\joinrel\relbar\joinrel\relbar\joinrel\relbar\joinrel\relbar}\limits^{*}_{#1}}}
\def\vvlongtra#1{\smash{\mathop{\mid\kern
-1pt\joinrel\relbar\joinrel\relbar\joinrel\relbar\joinrel\relbar\joinrel\relbar}\limits^{*}_{#1}}}
\def\vvvlongtra#1{\smash{\mathop{\mid\kern
-1pt\joinrel\relbar\joinrel\relbar\joinrel\relbar\joinrel\relbar\joinrel\relbar\joinrel\relbar}\limits^{*}_{#1}}}
\def\etra#1{\smash{\mathop{\mid\kern
-1pt\joinrel\relbar\joinrel\relbar}\limits_{#1}}}
\def\X{{\mathcal{X}}}
\def\iff{\Leftrightarrow}
\def\wh{\widehat}
\def\cE{{\mathcal{E}}}
\def\N{\mathbb{N}}
\def\last{\mbox{last}}
\def\max{\mbox{max}}
\def\min{\mbox{min}}
\def\Z{\mathbb{Z}}
\def\bi{\begin{itemize}}
\def\ei{\end{itemize}}
\def\beq{\begin{equation}}
\def\eeq{\end{equation}}
\newtheorem{T}{Theorem}[section]
\newcommand{\bt}{\begin{T}}
\newcommand{\et}{\end{T}}
\newcommand{\ftd}{$\square$\end{T}}
\newtheorem{Proposition}[T]{Proposition}
\newcommand{\bp}{\begin{Proposition}}
\newcommand{\ep}{\end{Proposition}}
\newcommand{\fpd}{$\square$\end{Proposition}}
\newtheorem{Corol}[T]{Corollary}
\newcommand{\bc}{\begin{Corol}}
\newcommand{\ec}{\end{Corol}}
\newcommand{\fcd}{$\square$\end{Corol}}
\newtheorem{Remark}[T]{Remark}
\newcommand{\br}{\begin{Remark}}
\newcommand{\er}{\end{Remark}}
\newcommand{\frd}{$\square$\end{Remark}}
\newtheorem{Example}[T]{Example}
\newcommand{\be}{\begin{Example}}
\newcommand{\ee}{\end{Example}}
\newtheorem{Problem}[T]{Problem}
\newcommand{\bq}{\begin{Problem}}
\newcommand{\eq}{\end{Problem}}
\def\abstract#1{\par\bigskip
\begingroup\small
\baselineskip=12truept
 {\bf Abstract.}  #1}         
\newcommand\pref{\mathop{\mbox{Pref}}}
\newcommand{\dip}[2]{\del_{#2}(#1)}
\newcommand{\FG}{\mathrm{FG}}
\newcommand{\FGRI}[1]{F(#1)}
\newcommand{\FA}{\FGRI{A}}
\newcommand{\FIM}[1]{\mathrm{FIM}(#1)}
\newcommand{\ie}{i.e.,\xspace}
\newcommand{\eg}{e.g.,\xspace}
\newcommand{\IFF}{if and only if\xspace}
\newcommand{\homo}{homomorphism\xspace}
\newcommand{\morph}{morphism\xspace}
\newcommand{\involu}{involution\xspace}
\newcommand{\subst}{substitution\xspace}
\newcommand{\solu}{solution\xspace}
\newcommand{\sateq}{sat-equivalent\xspace}
\newcommand{\lds}{ ,\ldots, }
\newcommand{\set}[2]{\left\{#1\mathrel{\left|\vphantom{#1}\vphantom{#2}\right.}#2\right\}}
\newcommand{\oneset}[1]{\left\{\mathinner{#1}\right\}}
\newcommand{\os}{\oneset}
\newcommand{\es}{\emptyset}
\newcommand{\sse}{\subseteq}
\newcommand{\abs}[1]{\left|\mathinner{#1}\right|}
\newcommand{\Abs}[1]{\left\Vert\mathinner{#1}\right\Vert}
\newcommand{\ceil}[1]{\left\lceil\mathinner{#1} \right\rceil}
\newcommand{\PSPACE}{\ensuremath{\mathsf{PSPACE}}}
\newcommand{\DEXPTIME}{\ensuremath{\mathsf{DEXPTIME}}}
\newcommand{\NP}{\ensuremath{\mathsf{NP}}}
\renewcommand{\phi}{\varphi}
\newcommand{\alp}{\alpha}
\newcommand{\bet}{\beta}
\newcommand{\gam}{\gamma}
\newcommand{\del}{\delta}
\newcommand{\sig}{\sigma}
\newcommand{\Sig}{\Sigma}
\newcommand{\Gam}{\Gamma}
\newcommand\OO{\Omega}
\newcommand\XX{\X}
\newcommand{\Oh}{\mathcal{O}}
\newcommand{\ov}[1]{\overline{#1}}
\newcommand{\oi}[1]{{#1}^{-1}}
\newcommand{\invol}{\overline{\,^{\,^{\,}}}}
\date{\today}
\begin{document}

\maketitle

\begin{center}\small
2010 Mathematics Subject Classification: 20M18, 20F70, 03D40

\bigskip

Keywords: equation, language equation, free inverse monoid, idempotent variable,
one-variable equation.
\end{center}
\abstract We introduce the notion of idempotent variables for studying
equations in inverse monoids. It is proved that it is decidable in
singly exponential time (DEXPTIME) whether a system of equations in
idempotent variables over a free inverse monoid has a solution. 
Moreover the problem becomes hard for DEXPTIME, as soon as the quotient group of the free inverse monoid has rank at least two. 
The upper bound is proved by a direct reduction to solve language equations
with one-sided concatenation and a known complexity result by Baader
and Narendran (\emph{Unification of concept terms in description logics},
2001). For the lower bound we show hardness for a restricted class of language equations. 

Decidability for systems of typed equations over a free inverse monoid
with one irreducible variable and at least one unbalanced equation is
proved with the same complexity upper-bound.
 
Our results improve known complexity bounds by 
Deis, Meakin, and S{\'e}nizergues (\emph{Equations in free inverse monoids}, 2007).
 Our results
also apply to larger families of equations where no decidability has
been previously known. The lower bound confirms a conjecture made in the 
conference version of this paper which was presented at Computer Science in Russia (CSR 2015).

\section{Introduction}
It is decidable whether 
equations over free monoids and free groups are solvable. 
These classical results were proved by Makanin in his seminal papers \cite{mak77,mak82}. 
A first estimation of the time complexity for deciding solvability was more than triple or four times exponential, but over the years it was lowered. It went down to PSPACE by Plandowski \cite{pla99focs,pla04jacm} for free monoids. Extending his method Guti{\'e}rrez showed that the same complexity bound 
applies in the setting of free groups \cite{gut2000stoc}. In \cite{Jez15jacm} Je\.z  used his ``recompression technique'' and achieved the best known space complexity to date: NSPACE($n \log n$). Perhaps even more importantly, 
he presented the simplest known proof deciding the problem {\sc Wordequations}
leading to an easy-under-stand algorithmic description for the set 
of all solutions for equations over free monoids and free groups (with
rational constraints) \cite{DiekertJP2014csr}. Actually, \cite{CiobanuDEicalp2015} showed that the set of all solutions 
in reduced words 
over a free group is an indexed language. More precisely it is an EDT0L language.

In the present paper we study equations over inverse monoids. Inverse monoids are monoids with involution and constitute the most natural
intermediate structure between monoids and groups. They are well-studied and pop-up in various applications, for example when investigating systems which are deterministic and codeterministic. Inverse monoids arise naturally as monoids of injective
transformations closed under inversion. Indeed, up to isomorphism,
these are all the inverse monoids, as stated in the classical
Vagner-Preston representation theorem. This makes inverse monoids
ubiquituous in geometry, topology and other fields.


The fifties of the last century boosted the systematic study of inverse monoids. However, the word problem remained unsolved until the early seventies, when
Scheiblich \cite{Scheiblich73} and Munn \cite{Munn:74} independently provided
solutions for free inverse
monoids. The next natural step is to consider solvability of equations, \ie the existential theory. Rozenblat's paper \cite{roz85}
destroyed all hope for a general solution: solving equations in free inverse
monoids is undecidable. Thus,
the best we can hope is to prove decidability for particular
subclasses. For almost a decade, the
reference paper on this subject has been the paper of Deis, Meakin, and
S\'enizergues \cite{DMS07ijac}. The authors considered the following 
\emph{lifting problem}. The input is given by an equation over a free inverse monoid together with a solution over the free quotient. The question is whether the solution over the group can be lifted to a solution in the inverse monoid. 
\cite{DMS07ijac} showed decidability of the lifting problem using Rabin's tree theorem. The result is an algorithm which is super-exponential (and 
at least doubly exponential in their specific setting). 

In the present paper, we achieve various improvements. Our main result lowers the complexity of the lifting problem to singly exponential time; and as soon as the input is a system of at least two equations, then the the lifting problem becomes $\DEXPTIME$-hard. Moreover, we study equations with idempotent variables instead of lifting properties, which leads to a uniform approach and a simplified proof. It also enabled us to generalize some results concerning one-variable equations to a broader setting, thereby
leading to new decidability results. 

A more precise statement about the progress achieved is as follows. First, Theorem~\ref{thm:marie} shows that deciding solvability of systems of equations
in idempotent variables over $\FIM{A}$ is $\DEXPTIME$-complete. The upper bound improves the \cite[Thm.~8]{DMS07ijac}. Our proof is based on a  well-known result {}from \cite{BaaderN01} by Baader and Narendran, while
the complexity of the algorithm in \cite[Thm.~8]{DMS07ijac} is
much higher, since the algorithm involves Rabin's Tree Theorem\footnote{
The $\DEXPTIME$ upper bound was obtained first by 
the second and third author, but not published. The same improvement was discovered later independently by the two other authors; and the present paper joins both approaches. In addition, we take the opportunity to correct a mistake in \cite{DMS07ijac} about some special one-variable equations, where Assumption~2 in \prref{def:numberedsunny} was missing.
}. The lower bound, which is $\DEXPTIME$-hardness (for systems of two equations
and where the quotient group of the free inverse monoid has rank at least two) confirms a conjecture in the conference version of the present 
paper \cite{DiekertMSScsr15}. 
Second, with respect to unbalanced one-variable equations
and \cite[Thm.~13]{DMS07ijac},
our Theorem~\ref{thm:frida} admits the presence of arbitrarily many
idempotent variables, and the complexity
very much improved in view of Theorem~\ref{thm:marie}.
Morover, our proofs are shorter and easier to understand by a direct reduction to 
language equations. 

\section{Preliminaries and notation}\label{sec:prel}
\paragraph{Sets and finite subsets.}Given a set $S$, we denote by $2_f^S$ the set of {\em finite} subsets of the set $S$.
\paragraph{Complexity.}
A function $p: \N \to \N$ is called {\em polynomial} if $p(n) \in n^{\Oh(1)}$. 
It is  called \emph{singly exponential} if $f(n) \leq  2^{p(n)}$ where $p$ is some
polynomial. 
The complexity class 
$\DEXPTIME$ refers to problems which 
can be solved 
on deterministic Turing machines within a  
singly 
exponential time bound. A problem is encoded as a subset over the binary alphabet $\os{0,1}$.
A problem $P$ is $\DEXPTIME$-hard, if for every problem $L \in \DEXPTIME$
there exists a polynomial time computable function $f:\os{0,1}^* \to \os{0,1}^*$ such that: $w\in L \iff f(w) \in P$. It is called $\DEXPTIME$-complete if it belongs to $\DEXPTIME$ and, in addition, it is $\DEXPTIME$-hard. In a few places we also refer to other complexity classes
like $\PSPACE$ (=\emph{polynomial space}) or $\NP$ (=\emph{nondeterministic polynomial time}).
The notation is standard, see for example \cite{pap94}. As usual in the literature, explicit encodings of problems are omitted. Our reductions are actually ``logspace'' reductions. Formally, this makes the lower bound results stronger, but this is not our primary goal: so we stay within the framework of polynomial-time reductions. 

\paragraph{Monoids and groups.}
A \emph{monoid} is a nonempty set $M$ with a binary associative operation: $(x,y) \mapsto x\cdot y$ together with a neutral element $1$ satisfying $1\cdot x=x\cdot 1=x$ for all $x \in M$. Frequently, we write $xy$ instead of $x \cdot y$. A group is a monoid $G$ where for each $x\in G$ there exists some $\ov x \in G$ such that $x \ov x = 1$. If $G$ is a group, then its \emph{inverse} $\ov x= \oi x$ is uniquely defined.

\paragraph{Words and languages.}
An {\em alphabet} is a (finite) set; and an element of an alphabet is 
called a \emph{letter}. The free monoid generated by an alphabet
$A$ is denoted by $A^*$.  
The elements of $A^*$ are called \emph{words}: these are the finite sequences of letters. The empty word is denoted
by $1$ as the neutral element in other monoids as well, provided the operation is written as a multiplication.  
The \emph{length} of a word $u$ is denoted by $\abs{u}$. We
have $\abs{u} = n$ for $u= a_1 \cdots a_n$ where $a_i \in A$. The
empty word has length $0$, and it is the only word with this property.
A word $u$ is a \emph{factor} of a word $v$ if there exist $p,q \in A^*$ such
that $puq = v$. 
It is a \emph{prefix}, if $uq = v$ for some $q \in
A^*$, and it is a \emph{suffix}, if $pu = v$ for some $p \in A^*$.  
A \emph{language} $L$ over $A$ is a subset of $A^*$. It is called 
\emph{factor-} (resp.~\emph{prefix-}) (resp.~\emph{suffix-}) \emph{closed}, 
if with every $u$ every {factor}, (resp.~{prefix}), (resp.~{suffix}) of $u$ belongs to $L$ as well. We write  $\pref(L)$ for its prefix-closure, thus 
$$\pref(L)= \set{u\in A^*}{\exists v \in L: u \leq v}.$$

\paragraph{Involutions.}
 An \emph{involution}  
is a mapping $\invol$ such that 
$\overline{\overline{x}} = x$ for all elements. 
In particular, an involution is a bijection.  The identity is an \involu.

\paragraph{Monoids with involutions.}
If an involution is defined for a monoid, then we additionally require $\overline{xy}=\overline{y}\,\overline{x}$ for all its elements $x, y$. Every group is a monoid with involution by letting 
$\ov x = \oi x$. 

If an alphabet is equipped with an involution, then we extend it to the free monoid $A^*$ by 
$$\ov{a_1 \cdots a_m}= \ov{a_m} \cdots \ov{a_1}.$$
When $\ov a = a$ for all $a \in A$, then $\ov{w}$ simply means to read the word from right-to-left. Every alphabet $B$ (without involution) can be embedded into an alphabet $A$ with involution without fixed points by letting
$A = B\cup \ov B$ where $\ov B= \set{\ov a}{a \in B}$ is a disjoint copy of 
$B$. The involution maps $a\in B$ to $\ov a$ and vice versa.

The identity is a \morph for monoids with \involu \IFF the monoid is commutative. 
In particular, given a set $S$ the set of finite subsets $2_f^S$ is a commutative  
monoid where the operation is the union. Thus, we also write 
$L+K$ instead of $L \cup K$. Elements of $s \in S$ are identified with singletons $\os s \in 2_f^S$. According to the additive notation the neutral element in $2_f^S$ is denoted as $0$. We have $0 = \es$. Actually, in our application we have 
$1 \in S \sse A^*$ and then $1\in 2_f^S$ denotes the singleton $\os 1$. 
Thus, $0 \neq 1$ in $2_f^S$. There is however no risk of confusion: similar conventions are standard for $\N$ or $\Z$. 

\paragraph{Homorphisms and~\morph{s}.}
A \emph{\homo} is a mapping which respects the algebraic structure, whereas the notion {\em \morph} refers to a mapping which 
respects the involution and in addition, 
depending on the category, respects the algebraic structure, too.
Hence, a  \morph between sets with involution is just a mapping respecting 
the involution, whereas a \morph between monoids with involution
is a monoid \homo respecting the involution. A \homo between groups is a \morph. 

\paragraph{Free groups.}
Let $A$ be an alphabet with \involu. It defines a quotient group $\FA$ by adding defining relations $a \ov a= 1$ for all $a \in A$. 
If we can write $A$ as a disjoint union $B \cup \set{\ov a}{a \in B}$, then 
$\FA$ is nothing but the free group $\FG(B)$ in the standard meaning. In general, $\FA$ 
is a free product of a free group with cyclic groups of order $2$. Although our primary interest is the usual free group $\FG(B)$, the notation $\FA$ is more convenient for us. Moreover, various results hold for $\FA$ and without changing the proofs. Last but not least, $\FA$ is the ``free group'' with respect to the category of sets with involution: every \morph of $A$ to a group $G$ extends uniquely to a \morph from $\FA$ to $G$.

As a set (with involution) we can identify $\FA$ with the subset of reduced words in $A^*$. As usual, a word is called \emph{reduced} if it does not contain any factor $a \ov a$ where $a \in A$. Observe that this embedding 
of $\FGRI A$ into $A^*$ is indeed compatible with the involution. 
In the following we let $\pi: A^* \to \FGRI A$ be the canonical
\morph from $A^*$ onto $\FA$.
It is well-known (and easy to see) that
every word $u \in {A}^*$ can be transformed into a unique reduced
word $\wh{u}$ by successively erasing factors of the form $a\ov{a}$ where $a \in A$. This leads to the assertion
$$\forall u,v \in {A}^*: \pi (u)= \pi(v) \; \iff \; \wh{u} =
\wh{v}.$$

We systematically identify the set 
$\FA$ with the subset  of reduced words in $A^*$. Concepts such as length, factor, 
prefix, and prefix-closure are inherited from free
monoids to free groups  via reduced words. 
For the same reason, it makes sense to write $ \wh u = \pi (u),$ 
for $u\in A^*$, because $\pi(u) \in \FA$
is identified with $\wh u \in A^ *$. If 
 $L\sse A^*$ is prefix-closed, then $\wh L= \set{\wh u}{u \in L}\sse \FA$ is prefix-closed, too (\prref{lem:pretopre}). We have 
$\wh L \sse \wh{{A}^*}= \FA \sse A^*$.

\paragraph{Free inverse monoids}
A monoid $M$ is said to be {\em inverse} if for every $x\in M$ there
exists a unique element $\ov x\in M$ satisfying  $x \ov x x = x$
and $\ov x x \ov x= \ov x$. Clearly, $\ov {\ov x}= x$ by uniqueness of $\ov x$ and, hence,  $M$ is a set with involution. The mapping 
$x \mapsto \ov x$ is also called  an \emph{inversion}.
Idempotents commute in inverse monoids (see \eg \cite{pet84}), hence the subset 
$E(M) = \set{ e \in M}{ e^2 = e}$ is a commutative submonoid. 
Since necessarily $\ov e = e$ for $e \in E(M)$ one easily deduces that 
$\ov {xy}= \ov y \, \ov x$ for all $x,y\in M$. As a consequence, 
an inverse monoid is a monoid with involution. 

In the literature the notation $\ov x = \oi x$ is also used for elements of inverse monoids, just as for groups
(which constitute a proper subclass of inverse monoids). By default, 
the involution on an inverse monoid (and hence in every group) is supposed to be given by its inversion.
We proceed now to describe Scheiblich's construction of the free inverse
monoids $\FIM{A}$ where $A$ is an alphabet with \involu.  

The  elements of $\FIM{A}$ are pairs 
$({P},g)$, where the second component is a group element $g \in \FA$ and the first component is a finite prefix-closed subset ${P}$ 
of $\FA$ such that $g \in {P}$. In other terms, this means that ${P}$ is a finite connected subset of the Cayley graph of $\FA$ (over $A$) such that $1, g \in {P}$. Formally, we let 
$$\FIM{A} = \set{ ({P},g) }{\abs {P} < \infty  \wedge  g \in {P}= \pref({P}) \sse \FA}.$$
The multiplication  on ${\FIM{A}}$ is defined through
$$({P},g)({Q},h) = ({P} \cup g{Q},gh).$$
It is easy to see that ${\FIM{A}}$ is a monoid with identity $(\os{ 1 },1)$
and every $({P},g)$ has a unique inverse $(\oi g {P},\oi g)$, hence ${\FIM{A}}$ is an inverse monoid.

Let $\psi:{A}^* \to {\FIM{A}}$ be the homomorphism of monoids defined by
$\psi(a) = (\{ 1,a\},a).$ Then we have $\psi(\ov a) = (\os{ 1,\ov a},\ov a) = \ov{(\{ 1,a\},a)}$ 
and  $\psi$ is a \morph of monoids with involution. 
We obtain the universal property of being free with respect
to sets with involution: 
let $M$ be an inverse monoid  and  
$\phi: A \to M$ a \morph of sets with involution, then there is exactly one \morph $\eta: \FIM A \to M$
of monoids with involution such that $\Phi(a) = \phi(a)$ for all 
$a \in A$. 
In other words, let $\iota = \psi|_A$ and 
$\phi:A \to M$ be any mapping respecting the involution where $M$ is an inverse monoid. Then there
exists a unique \morph of inverse monoids $\eta:{\FIM{A}} \to M$ such
that the following diagram commutes.
$$\xymatrix{
A \ar[d]^{\phi} \ar[rr]^{\iota} && \FIM{A} \ar@{-->}[dll]^{\eta} \\
M & \  
}$$ In particular, 
$\pi:   A^* \to \FGRI A$ factorizes through~$\eta$. The monoid ${\FIM{A}}$ is, up to isomorphism, uniquely defined by this universal property: ${\FIM{A}}$ is a {\em free
inverse monoid} in the category of sets with involution.
If $A$ can be written as a disjoint union $A = B \cup \set{\ov a}{a \in B}$, then ${\FIM{A}}$ is the {\em free
inverse monoid} over $B$ in the category of sets (without involution).

The following  diagram summarizes our notation. 
$$\xymatrix{
A^* \ar[d]^{\pi} \ar[r]^{\!\psi} & \FIM{A} \ar@{-->}[dl]^{\eta} \\
\FGRI A  &    \!\!\!\!    \!\!\!\! = \quad \set{\wh w}{w \in A^*} \sse  A^*& \hspace{-1cm}
\text{ as sets}
}$$

\section{Language equations}\label{sec:lefm}
Henceforth, $A$, and $B$ denote  alphabets of constants and $\OO$ denotes 
an alphabet of variables. The alphabets are finite and disjoint. We assume that $B \sse A$ and that $A\cup \OO$ is a set with involution. 
However, for technical reasons we require that 
$X= \ov  X$ for all variables. 
We use $a, b, c, \ldots$ to denote letters of $A$, whereas variables are denoted by capital letters $X,Y,Z \ldots$.

Our complexity results for solving certain equations over free inverse
monoids rely on a paper of Baader and Narendran \cite{BaaderN01}.
The paper shows that the satisfiability problem of language equations with one-sided concatenation is $\DEXPTIME$-complete 
for free monoids. As we need the corresponding result for free groups as well, we define the notion of language equation and its solutions in a more general framework.

A \emph{system of language equations} $\cS$ (with one-sided concatenation) has the form 
\begin{equation}\label{eq:ggs0}
L_{k} + \sum_{i\in I_k} u_{ki}X_i = K_k + \sum_{j\in J_k} u_{kj}X_j \quad \text{ for } 1\leq k \leq n.
\end{equation}
Here, $n \in \N$ and $I_k$, $J_k$ are finite (disjoint) index sets,  $L_k$, $K_k$ are
finite subsets of $A^*$, $u_{ki}, u_{kj} \in A^*$ are words, 
and $X_{i}, X_{j} \in \OO$. 

If $L_k$, $K_k$ are
 subsets of $B^*$ and  $u_{ki}, u_{kj} \in B^*$, then we say that $\cS$ is a system with coefficients over $B$. 

The \emph{size} of $\cS$ is defined as 
$$\Abs \cS = \abs{A\cup \OO}+ \sum_{k= 1}^n \left(\abs{I_k} + \abs{ J_k} + \sum_{u\in L_k \cup K_k}\abs {u} + \sum_{i \in I_k} \abs{u_{ki}} + \sum_{j\in J_k} \abs {u_{kj}}\right).$$

\begin{example}\label{ex:ggs1}
Recall that a word  $u$ is identified with the 
singleton $\os u \sse A^*$. Consider $A= \os{a, \ov a, b, \ov b}$ and \begin{equation}\label{eq:ggs1}
a \ov a+ a\ov aX + b\ov bY = b\ov b + a\ov aY + b\ov bX . 
\end{equation}
It is a system in one equation and its size is $22$.
\end{example}

The notion of solution depends on the context. In our paper we use solutions in 
finite subsets of free groups and free monoids. Let $M$ denote  either the free monoid 
$A^*$ or the group $\FGRI A$. In particular, we have inclusions of sets with \involu 
$A \sse M \sse A^*$; and $A$ generates $M$ as a monoid. 
%

A \emph{solution} of $\cS$ in (\ref{eq:ggs0}) is a mapping $\sig: \OO \to 2^{A^*}_f$ 
such that 
\begin{equation*}
L_{k} + \sum_{i\in I_k} u_{ki}\sig(X_i) = K_k + \sum_{j\in J_k} u_{kj}\sig(X_j)
\end{equation*}
becomes an identity in $2_f^M$ for all $1 \leq k \leq n$.
Thus, a solution 
substitutes each $X\in \OO$ by some finite 
subset $\sig(X)$ of $A^*$, but the interpretation is in $M$. 
Of course,  for $M=\FA$ we can demand that each $\sig(X)$ must be a finite subset in reduced words:  $\sig(X)\subseteq \FA$.

\begin{theorem}[\cite{BaaderN01}, Thm.~6.1 and Thm.~7.6]\label{thm:bn01}
The following problem can be solved in $\DEXPTIME$; and it is $\DEXPTIME$-complete for  $\abs B \geq 2$.

{\bf Input.} A system $\cS$ of language equations with coefficients over $B$.

{\bf Question.}  Does $\cS$ have a solution in the free monoid $B^*$?
\end{theorem}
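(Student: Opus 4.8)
The statement is quoted from Baader and Narendran, so the plan below reconstructs an argument of that shape; I treat the $\DEXPTIME$ upper bound and the matching lower bound separately.

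For membership in $\DEXPTIME$, the plan is to reduce solvability of a one-sided system $\cS$ to a nonemptiness test for a tree automaton of size singly exponential in $\Abs{\cS}$. A candidate solution is encoded as a labeling $\chi$ of the complete $\abs{B}$-ary tree whose nodes are the words of $B^*$: node $w$ carries the bit-vector recording, for each variable $X\in\OO$, whether $w\in\sig(X)$. Solutions correspond exactly to finitely supported labelings (all but finitely many nodes carry the zero vector) satisfying, at every node, the Boolean constraint read off from the $n$ equations. The point requiring care is that this constraint is \emph{not} naively local: since $w\in u\sig(X)$ holds iff $u$ is a prefix of $w$ and $u^{-1}w\in\sig(X)$, the equation at $w$ relates the label of $w$ to labels at the nodes $u_{ki}^{-1}w$, which lie in \emph{different} subtrees, so prefix-removal is non-local on the solution tree. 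The resolution is to absorb this coupling into the automaton state via a synchronized product over the bounded set $P$ of all prefixes of the coefficients $u_{ki},u_{kj}$ (their lengths are bounded by $\Abs{\cS}$): the state tracks the labels of the finitely many shifted nodes $\os{u^{-1}w : u\in P}$ simultaneously, so that the equational check becomes a one-way finite-state condition. Because there are $\abs{\OO}$ variables and $\abs P$ is polynomial, the state space has size $2^{\Oh(\Abs{\cS})}$, i.e.\ singly exponential.

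To finish the upper bound I would make the acceptance condition enforce finite support by declaring the zero label absorbing, so that accepting trees are those in which almost every node is labeled zero; this is a \emph{looping}/weak condition rather than a full Rabin/parity condition. This is exactly where one beats the Deis--Meakin--S\'enizergues bound: emptiness of a \emph{nondeterministic} looping (or B\"uchi) tree automaton is decidable in time polynomial in its number of states, so the whole procedure runs in singly exponential time, avoiding Rabin's theorem entirely. One then argues a regular/small-model property: if $\cS$ is solvable it is solvable by a labeling accepted by this automaton, and conversely any accepted tree yields, by reading back the finitely many nonzero nodes, finite subsets $\sig(X)\sse B^*$ solving $\cS$.

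For $\DEXPTIME$-hardness when $\abs B\ge 2$, the plan is a reduction from a canonical $\DEXPTIME$-complete problem with a branching flavour, for instance the intersection-emptiness problem for deterministic top-down tree automata, or equivalently the acceptance problem for alternating polynomial-space Turing machines (using $\mathsf{APSPACE}=\DEXPTIME$). Two letters suffice to address a binary branching structure; one then writes equations, with one variable per control state, whose only solutions encode a consistent accepting computation tree, left-concatenation by a letter realizing ``move to a child'' and the union $+$ realizing branching. Solvability of the resulting $\cS$ over $B^*$ is then equivalent to acceptance by the machine.

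The main obstacle is the upper bound, in two linked points. First, turning the non-local prefix-removal coupling into a \emph{one-way}, finite-state check: alternation or two-wayness would push emptiness testing to exponential time in the (already exponential) state count and thus to a double-exponential total, so the synchronized-shift product must be engineered to yield an ordinary nondeterministic looping automaton. Second, establishing the regular/small-model property and the passage back to \emph{finite} support, since the variables range over finite languages while tree automata naturally produce infinite regular ones; one must verify that searching over automaton-definable labelings loses no solutions and that accepted trees prune to finite support. Getting both right is precisely what separates the singly-exponential bound from the earlier super-exponential approach.
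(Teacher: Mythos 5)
You should first note what the paper actually does with this statement: it does not prove it, but imports it from Baader--Narendran; the only ingredient supplied in the paper itself is \prref{rem:bn7.6}, which reduces a \emph{system} to a single equation via a prefix code (the cited theorems are stated for one equation). Your reconstruction handles systems directly inside the automaton, which is an acceptable substitute for that remark, and your overall architecture --- an exponential-size tree automaton whose emptiness is testable in time polynomial in the number of states for the upper bound, and a reduction from intersection emptiness of deterministic top-down tree automata (equivalently alternating \PSPACE) for the lower bound --- is indeed the shape of the cited proof.

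However, the upper bound as you describe it has a genuine flaw in its central mechanism. With nodes $B^*$ and children $wb$, the node $u^{-1}w$ (delete a prefix at the \emph{front} of $w$) is not an ancestor of $w$: the constraint at node $ab$ refers to node $b$, which lies on a different branch. Your ``synchronized product'' state at $w$ can only carry \emph{guesses} for the labels of $\set{u^{-1}w}{u \in P}$, and since the runs of a top-down tree automaton on disjoint subtrees are independent, the guesses about one and the same node $v$ made at the distinct nodes $\set{uv}{u\in P}$ can never be checked against each other, nor against the actual label read at $v$. As stated the automaton is unsound. The standard repair --- and the reason one-sided concatenation is decidable at all --- is to address the tree by \emph{prepending}: let the children of $w$ be the words $bw$. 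Then $u^{-1}w$ is the ancestor of $w$ at distance $\abs{u}$, and a sliding window holding the label vectors of the last $\max_{k,i}\abs{u_{ki}}$ ancestors (plus a bounded record of the word itself, to decide membership in $L_k$ and $K_k$) makes every check one-way local, with state space still $2^{\Oh(\Abs{\cS})}$.

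A second gap: declaring the zero label absorbing under a looping acceptance does \emph{not} enforce finite support; it only makes the nonzero region prefix-closed, and an accepted tree may contain an infinite nonzero branch. Concretely, for the equation $1 + aX = X$ the unique solution over languages is $a^*$, which is infinite, so the instance is unsolvable in finite subsets of $B^*$ --- yet your automaton accepts the labeling marking exactly $a^*$, so your procedure answers wrongly. You need either automata on \emph{finite} trees (cut just below the support, with a zero buffer of depth $\max\abs{u_{ki}}$ so the residual constraints at all-zero nodes are verified) or a weak ``every path is eventually all-zero'' condition; with finite-tree automata, emptiness remains polynomial in the number of states, so the singly exponential total survives. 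Your hardness sketch names the right source problem and two letters do suffice for branching, but as written it is a plan rather than a reduction; that part carries no checkable content.
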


\begin{remark}\label{rem:bn7.6}
\cite{BaaderN01} states \prref{thm:bn01}
for a single equation. However, this covers the general case. 
Indeed, assume that a system $\cS$ 
of  language equations over the free monoid 
$A^*$ has $n$ equations. Without restriction we have $\abs A \geq 2$.  Choose $n$ pairwise different words $p_1, \ldots, p_n\in A^*$ of equal length (say $\ceil{\log_2 n}$);  and for $1 \leq k \leq n$ replace 
the $k$-th equation 
$L_{k} + \sum_{i\in I_k} u_{ki}X_i= K_k + \sum_{j\in J_k} u_{kj}X_j$ 
by 
$$p_kL_{k} + \sum_{i\in I_k} p_k u_{ki}X_i = p_k K_k +  \sum_{j\in J_k} p_k u_{kj}X_j.$$ 
 Summing all left-hand sides and all right-hand sides yields a single equation
\begin{equation}\label{eq:sinLeq}
 \sum_{k=1}^n (p_kL_{k} + \sum_{i\in I_k} p_k u_{ki}X_i) = \sum_{k=1}^n (p_k K_k +  \sum_{j\in J_k} p_k u_{kj}X_j).
\end{equation}
The reduction works since $\os{p_1 , \ldots, p_n}$ is a prefix code. 
Note that the transformation of the system $\cS$ 
to \prref{eq:sinLeq} preserves the set of \solu{}s.
\end{remark}
Consider again \prref{eq:ggs1}: 
$a \ov a+ a\ov aX + b\ov bY = b\ov b + a\ov aY + b\ov bX$. 
 Over $M = \FA$ the equation becomes trivial: 
it states $1 + X+Y= 1 + Y + X$ which is a tautology. Hence every \subst 
in finite subsets of $A^*$ is a solution over $M$. However, for $M = A^*$ the 
structure of solutions is more restricted. 
In the spirit of  \prref{rem:bn7.6} we see that \prref{eq:ggs1} encodes over $A^*$
a system of two equations: $1+X = Y$ and $Y = 1+X$. 
Hence, the set of solutions over $A^*$ is the 
set of mappings $\sig: \OO \to 2^{A^*}_f$ 
such that $\sig(X) =\sig(Y)$ and $1 \in \sig(X) \cap \sig(Y)$. 
\section{Typed equations over free inverse monoids}\label{sec:typeq}
 An {\em equation} over $\FIM{A}$ is a pair $(U,V)$
of words over $A\cup\XX$, sometimes written as $U=V$. Here $A$ is an alphabet of 
\emph{constants} and $\XX$ is a set of variables. Variables $X\in \XX$ represent elements in $\FIM{A}$ and therefore $\XX$ is an  alphabet with involution, too.
Without restriction we may assume $X \neq \ov X$ for all $X \in \XX$.
A {\em solution} $\sig$
of  $U=V$ is a mapping $\sig: \XX \to A^*$  such that $\sig(\ov X) = \ov{\sig(X)}$ for all $X \in \XX$ and such that the replacement of variables by the substituted words in $U$ and in $V$ give the same element in $\FIM{A}$, \ie $\psi(\sig(U)) = \psi(\sig(V))$ in $\FIM{A}$, where $\sig$ is extended to a \morph $\sig: (A\cup \XX)^* \to A^*$
leaving the constants invariant. Clearly, we may specify  $\sig$ also by a mapping {}from $\XX$ to $\FIM{A}$.
For the following it is convenient to have two more types of variables which are used to represent specific elements in $\FIM{A}$. We let $\OO$ be a set of \emph{idempotent variables} and  $\Gam$ be a set of \emph{reduced variables}. Both sets are endowed with an involution. 
We let $\ov Z = Z$ for idempotent variables and $\ov x \neq x $ for all reduced variables. Thus, idempotent variables 
are the only variables which are self-involuting; and variables in $\Gam$ or $\XX$ are not self-involuting. 
We also insist that $A$, $\XX$, $\OO$, and $\Gam$ are pairwise disjoint. 
A {\em typed equation} over $\FIM{A}$ is a pair $(U,V)$ of 
words over $A\cup\OO \cup \Gam$.  A {\em system of typed equation}
is a collection $\cS$ of typed equations; and 
a {\em solution} $\sig$
of $\cS$ is given by a mapping respecting the involution
{}from $\OO \cup \Gam$ to $A^*$, 
which is extended to a \morph $\sig:(A \cup \OO \cup \Gam)^*\to A^*$
respecting the involution and letting the letters of $A$ invariant, 
such that the following conditions hold. 
\begin{enumerate}
\item $\psi(\sig(Z))$ is idempotent for all $Z \in \OO$.
\item $\sig(x)$ is a reduced word for all $x\in \Gam$.
\item We have $\psi(\sig(U))=\psi(\sig(V))$ for all $(U,V) \in \cS$. 
\end{enumerate}

\begin{lemma}\label{lem:taueqtoreq}
Let $(U,V)$ be an (untyped) equation over $\FIM{A}$. For each $X, \ov X\in \XX$ 
choose a fresh idempotent variable $Z_X\in \OO$ and fresh reduced 
variables  $x_X, \ov{x}_{X}\in \Gam$. 
Let $\tau$ be the word-substitution (i.e. monoid homomorphism) 
which replaces each $X, \ov X \in \OO$
by $Z_Xx_X$ and $\ov{x}_XZ_X$ respectively. 
If $\sig$ is a solution of  $(U,V)$ then  a solution $\sig'$ for $(\tau(U),\tau(V))$ 
can be defined as follows. For $\sig(X)= (P,g)$, where $g$ is represented by a 
reduced word, we let  $\sig'(Z_X) = (P,1)$ and $\sig'(x_X) = (\pref(g),g)$. \\
Conversely, if
 $\sig'$ solves $(\tau(U),\tau(V))$ with $\sig'(Z_X) = (P,1)$ and 
$\sig'(x_X) = (\pref(g),g)$ then  $\sig(X)= (P \cup \pref(g),g)$ defines a solution 
for $(U,V)$.
\end{lemma}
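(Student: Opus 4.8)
The plan is to reduce both implications to a single factorization identity in $\FIM{A}$, together with the elementary fact that a monoid morphism out of a free monoid is completely determined by its values on the generators.

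First I would record the arithmetic of $\FIM{A}$ that drives everything. For any element $(P,g)\in\FIM{A}$ the set $P$ is prefix-closed and contains $g$, so $\pref(g)\sse P$, and the product rule gives the factorization
$$(P,1)\,(\pref(g),g) = (P\cup\pref(g),\,g) = (P,g),$$
while dropping the hypothesis $\pref(g)\sse P$ leaves the general computation $(P,1)(\pref(g),g)=(P\cup\pref(g),g)$. I would also note that $(P,1)$ is idempotent (indeed $(Q,h)^2=(Q,h)$ forces $h=1$ in $\FA$, so these are \emph{all} the idempotents), that $(\pref(g),g)=\psi(g)$ is exactly the image of the reduced word $g$, and for the involution that $\oi g\,\pref(g)=\pref(\oi g)$, whence $\ov{(\pref(g),g)}=(\pref(\oi g),\oi g)=\psi(\oi g)$. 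Finally I would observe that $\tau$ itself respects the involution, since $\ov{\tau(X)}=\ov{Z_Xx_X}=\ov{x}_X\,\ov{Z}_X=\ov{x}_XZ_X=\tau(\ov X)$.

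The central observation, identical in both directions, is that the chosen data make $\sig'(\tau(X))=\sig(X)$ hold as elements of $\FIM{A}$. In the forward direction $\sig(X)=(P,g)$ satisfies $\pref(g)\sse P$, so $\sig'(\tau(X))=\sig'(Z_X)\sig'(x_X)=(P,1)(\pref(g),g)=(P,g)=\sig(X)$; in the backward direction the components $\sig'(Z_X)=(P,1)$ and $\sig'(x_X)=(\pref(g),g)$ are given, and one simply \emph{defines} $\sig(X)=(P\cup\pref(g),g)=(P,1)(\pref(g),g)=\sig'(\tau(X))$. Because $\tau$, $\sig'$, $\sig$ and $\psi$ all respect the involution, this equality on each $X$ propagates automatically to $\ov X$, and on a constant $a$ both composites return $\psi(a)$. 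Hence the two monoid morphisms $\psi\circ\sig'\circ\tau$ and $\psi\circ\sig$ from $(A\cup\XX)^*$ to $\FIM{A}$ agree on every generator, so they coincide on all words. Applying this to $U$ and $V$ gives $\psi(\sig(U))=\psi(\sig'(\tau(U)))$ and $\psi(\sig(V))=\psi(\sig'(\tau(V)))$, which turns ``$\sig$ solves $(U,V)$'' into ``$\sig'$ solves $(\tau(U),\tau(V))$''. It then only remains to confirm that the substitutions have the correct type: forward, $\sig'(Z_X)=(P,1)$ is idempotent and $\sig'(x_X)=\psi(g)$ is a reduced word, so conditions (1)--(2) hold; backward, $P\cup\pref(g)$ is finite and prefix-closed with $g\in\pref(g)$, so $\sig(X)$ is well formed, and $\sig(\ov X)=\ov{\sig(X)}$ by construction.

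I expect the only delicate point to be the bookkeeping of the involution: one must check that setting $\sig'(\ov{x}_X)=\ov{\sig'(x_X)}$ and $\sig(\ov X)=\ov{\sig(X)}$ is consistent with the displayed formulas, which comes down precisely to the two set identities $\oi g\,\pref(g)=\pref(\oi g)$ and $\pref(\oi g)\sse\oi g P$ (the latter because $\oi g P$ is prefix-closed and contains $\oi g=\oi g\cdot 1$). Everything else is the routine verification that products in $\FIM{A}$ collapse the prefix sets as expected, which I would not spell out in full.
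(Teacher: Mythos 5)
Your proof is correct, and it fills in precisely the verification the paper deems ``Trivial'': the factorization $(P,1)(\pref(g),g)=(P\cup\pref(g),g)$ in Scheiblich's model, the identity $\sig'(\tau(X))=\sig(X)$ propagated to all of $(A\cup\XX)^*$ by freeness, and the involution bookkeeping via $\oi g\,\pref(g)=\pref(\oi g)$. Since the paper offers no argument beyond declaring the lemma trivial, your write-up is exactly the intended (omitted) proof, with the delicate points correctly identified and checked.
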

\begin{proof}
Trivial. 
\end{proof}
%

By \prref{lem:taueqtoreq} we can reduce the satisfiability of 
equations in $\FIM{A}$ to  satisfiability of typed equations. The framework of typed equations is  more general; and it fits better to our formalism. 
Let  $(U,V)$ be a typed equation, by the \emph{underlying group equation}
we mean the pair $(\pi(U),\pi(V))$ which is obtained by erasing all idempotent variables. Clearly, if $(U,V)$ is satisfiable then 
$(\pi(U),\pi(V))$ must be solvable in the free group $\FGRI A$.
This leads to the idea of \emph{lifting} a solution of a group equation 
to a solution of $(U,V)$ in $\FIM{A}$. It has been known by \cite{DMS07ijac} that
it is decidable whether a lifting is possible. The following result
improves decidability by giving a deterministic exponential time bound. 

\begin{theorem}\label{thm:mainlift}
The following problem is in $\DEXPTIME$. 

{\bf Input.} A system $\cS$ of equations over 
 $\FIM{A}$ and a solution $\gam: \Gam \to \FGRI A$ of the system $\pi(\cS)$  of underlying group equations. 
 
{\bf Question.} Does $\cS$  have a  solution $\sig: \XX \to \FIM A$
 such that $\gam= \eta \circ \sig$? 
 
 Moreover, the problem becomes $\DEXPTIME$-hard as soon as 
$A$ contains four pairwise different letters $\alp, \ov \alp, \bet$, and $\ov \bet$, the system has two equations, and $\gam$ is the trivial mapping 
$\gam(x) = 1$ for all $x \in \Gam$. 
\end{theorem}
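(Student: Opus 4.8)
The theorem has two parts: an upper bound (the lifting problem is in $\DEXPTIME$) and a matching lower bound ($\DEXPTIME$-hardness). Let me think through both.

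For the upper bound, the key observation is that once we fix the group solution $\gam$, each variable $X \in \XX$ must map to some $(P,g) \in \FIM{A}$ where $g = \gam(x_X)$ is already determined. So the only freedom left is in choosing the prefix-closed sets $P$. By Lemma~\ref{lem:taueqtoreq}, we can split each $X$ into an idempotent variable $Z_X$ (carrying the set $P$ with group-component $1$) and a reduced variable $x_X$ (carrying the geodesic $(\pref(g),g)$, which is now pinned down by $\gam$). The idempotent part $\psi(\sig(Z_X)) = (P,1)$ is exactly a finite prefix-closed subset of $\FGRI A$ containing $1$. Multiplication in $\FIM{A}$ translates the set-component by the group element: $(P,g)(Q,h) = (P \cup gQ, gh)$. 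Once the group-components along the equation are fixed by $\gam$, the equation $\psi(\sig(U)) = \psi(\sig(V))$ reduces to an equality of the two prefix-closed sets obtained by taking the union of translated copies of the $P_X$'s and of the constant-generated singletons. The plan is to write this set-equality, over reduced words, as a system of language equations with one-sided concatenation: each idempotent variable $Z_X$ becomes a language variable ranging over finite prefix-closed subsets of $\FGRI A$, and each translate $g_{k,i} P_{X_i}$ becomes a term $u_{ki} \sig(Z_{X_i})$ with $u_{ki}$ the (reduced word for the) group prefix accumulated up to that occurrence. The constants contribute the fixed terms $L_k, K_k$. I then invoke Theorem~\ref{thm:bn01}, but in its free-group formulation: satisfiability of language equations with one-sided concatenation, interpreted in $2_f^{\FGRI A}$ rather than $2_f^{B^*}$, is still in $\DEXPTIME$.

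The main obstacle on the upper-bound side is that Theorem~\ref{thm:bn01} as quoted speaks about solutions in the \emph{free monoid} $B^*$, whereas lifting needs solutions interpreted in the \emph{free group} $\FGRI A$, and moreover the solutions must be \emph{prefix-closed} sets, not arbitrary finite languages. I would handle prefix-closedness by noting that prefix-closure is itself expressible within this formalism (one can force $P = \pref(P)$ by additional one-sided-concatenation constraints, or argue that the minimal solution of the set-equality is automatically prefix-closed because all the building blocks are). For the free-group versus free-monoid gap, the cleanest route is to encode reduced words of $\FGRI A$ as words over the doubled alphabet $A$ and reduce satisfiability in $2_f^{\FGRI A}$ back to satisfiability in $2_f^{A^*}$, keeping the size polynomial; since the reductions in Remark~\ref{rem:bn7.6} and the passage between $M = A^*$ and $M = \FGRI A$ are polynomial-time, the $\DEXPTIME$ bound is preserved. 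This is precisely the point flagged in the introduction, where the authors say they need ``the corresponding result for free groups as well,'' so I expect the proof to establish a free-group analogue of Theorem~\ref{thm:bn01} and then apply it.

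For the lower bound, the plan is a reduction from the $\DEXPTIME$-hard problem of Theorem~\ref{thm:bn01} (free-monoid language equations over a two-letter alphabet $B = \os{\alp,\bet}$) to the lifting problem with the trivial group solution $\gam(x) = 1$. When $\gam$ is trivial, every variable $X$ maps to an idempotent $(P,1)$, so lifting asks exactly whether there exist finite prefix-closed sets solving a system of set-equalities in $2_f^{\FGRI A}$ — which is a language-equation problem. The crux is that with $\gam$ trivial all group-components collapse, so the constant prefixes $u_{ki}$ that multiply the variables are themselves reduced words over $A = \os{\alp,\ov\alp,\bet,\ov\bet}$; by using only the \emph{positive} letters $\alp,\bet$ as one-sided coefficients I stay inside the positive-word (free-monoid) regime, so a hard instance over $B^* = \os{\alp,\bet}^*$ embeds faithfully. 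The remaining work is to encode an arbitrary free-monoid language equation as (the underlying set-equality of) a \emph{system of two} $\FIM{A}$-equations: one would use the prefix-code trick of Remark~\ref{rem:bn7.6} to compress a many-equation system into few equations, and then realize each resulting language equation as the idempotent component of an $\FIM{A}$-equation. The main obstacle here is getting down to exactly \emph{two} equations while keeping all coefficients positive and ensuring the lift back (solution of the $\FIM{A}$-system $\Rightarrow$ solution of the language system) is sound; this requires checking that prefix-closure of the $\FIM{A}$-solution does not create spurious solutions, which I expect to control by choosing the coefficient words so that prefix-closure adds only already-forced elements.
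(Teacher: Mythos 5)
Your overall architecture is the paper's: it proves \prref{thm:mainlift} in a few lines by splitting each $X$ via \prref{lem:taueqtoreq}, substituting the concrete element $(\pref(\gam(x_X)),\gam(x_X))$ for each reduced variable (a quadratic blow-up, since $\gam$ is part of the input), and reducing both the upper and the lower bound to \prref{thm:marie} on idempotent-variable systems; everything you sketch about language equations over $\FA$ and Baader--Narendran is the content of the proof of \prref{thm:marie}. However, at the two places you yourself flag as ``the main obstacle,'' the mechanisms you propose would fail. For the upper bound, closing the free-group/free-monoid gap by ``encoding reduced words over the doubled alphabet'' and reducing satisfiability in $2_f^{\FA}$ back to $2_f^{A^*}$ by a direct polynomial-time translation cannot work verbatim: the two interpretations genuinely differ because concatenation in $\FA$ cancels. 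For instance, \prref{eq:ggs1} is a tautology over $\FA$ but over $A^*$ forces $\sig(X)=\sig(Y)$ with $1\in\sig(X)$. The paper instead first normalizes to equations of shape $X=1$, $X=Y+Z$, $X=uY+\pref(u)$, $X=1+X$, and then runs a \emph{nondeterministic} polynomial-time marking procedure: for each unmarked $X=vaY+\pref(va)$ it guesses whether $\ov a\in\sig(Y)$ for a strong solution and, if so, splits $Y=Y'+\ov aY''+\pref(\ov a)$ with fresh variables. This yields singly exponentially many candidate systems over $A^*$, each handed to \prref{thm:bn01}; that is where the $\DEXPTIME$ budget is actually spent, and no deterministic polynomial re-encoding is known.

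For the lower bound, your claim that keeping all coefficients in $B=\os{\alp,\bet}$ means a hard instance over $B^*$ ``embeds faithfully'' into the $\FA$-interpretation is false as stated: nothing prevents an $\FA$-solution from assigning sets containing $\ov\alp,\ov\bet$, and cancellation against the positive coefficients can then satisfy the equations spuriously --- the problem is inverse letters, not (as you suggest) prefix-closure adding elements. The paper devotes \prref{prop:sys3} to exactly this repair: it first pads with a letter $d\in B$ (\prref{lem:transfeq}) so that solvability becomes equivalent to solvability in nonempty prefix-closed sets, compresses everything into one equation by the prefix-code trick of \prref{rem:bn7.6} (as you correctly guessed), and then adds one further equation of the form (\ref{eq:syntcSP2}), built from the characterization $K\sse\os 1\cup\bigcup_{a\in B}aK$ of languages that are simultaneously suffix-closed and contained in $B^*$, with a fresh variable $Z$ dominating the factors of all $\sig(X)$; a maximal-length argument then shows any $\FA$-solution of the resulting two-equation system already lives in $B^*$. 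Only after this surgery does the encoding of \prref{lem:marieade}, sending a term $T=L+\sum_i u_iX_i$ to the word $W(T)=\prod_{u\in L}u\ov u\cdot\prod_i u_iX_i\ov{u_i}$, give a sound two-equation system over $\FIM{A}$ in both directions. So you identified the right obstacles and the right target statements, but the ``positive coefficients suffice'' step is a genuine gap, and the free-group analogue of \prref{thm:bn01} needs the guessing argument rather than a plain encoding.
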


\begin{proof}
 For the upper bound we proceed as follows. 
 Due to \prref{lem:taueqtoreq} we first transform the system into a 
 new system with variables in $\OO \cup \Gam$. 
 Next we replace every reduced variable 
$x \in \Gam$ by $(\pref(\sig'(x)),\sig'(x))$. Since the solution 
is part of the input this increases the size of $\cS$ at most quadratic. We obtain a system of equations in idempotent variables and we apply \prref{thm:marie} in 
\prref{sec:slefg} below. 
Actually, \prref{thm:marie} shows also the lower bound, because fixing 
$\gam: \Gam \to \FA$ to be the trivial mapping  means that every lifting 
$\sig: \XX \to \FIM A$ turns $\sig(X)$ into an idempotent. Thus, fixing 
$\gam: \Gam \to \FA$ to be the trivial mapping, leads directly to the framework of idempotent variables. 
 \end{proof}
 
 The next result combines \prref{thm:mainlift} and a known
 complexity result for systems of equations over free groups \cite{DiekertJP2014csr}. 
 
\begin{corollary}\label{cor:mainsota}
Let $\cS$ be a system of equations over the free inverse monoid 
 $\FIM{A}$ and $\pi(\cS)$ the system of underlying group equations. 
 \begin{enumerate}
\item\label{sotai} On input $\cS$ it can be decided in polynomial space whether the system 
$\pi(\cS)$ of group equations has at most finitely many solutions. 
If so, then every solution has at most doubly exponential length. 
\item\label{sotaii} On input $\cS$ and the promise that $\pi(\cS)$ has at most finitely many solutions it can be decided in deterministic triple exponential time whether $\cS$ has a solution. 
\end{enumerate}
\end{corollary}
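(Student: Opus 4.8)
The plan is to combine the explicit description of all solutions of a system of group equations furnished by \cite{DiekertJP2014csr} with the lifting procedure of \prref{thm:mainlift}. Recall that \cite{DiekertJP2014csr} associates with $\pi(\cS)$ a finite directed \emph{solution graph} that can be computed in $\PSPACE$, has singly exponentially many vertices (each admitting a polynomially bounded description), and enjoys the following property: every solution $\gam$ of $\pi(\cS)$ is obtained by reading a path from the initial vertex to a terminal (trivial) vertex and composing the substitution \morphs labelling its edges, and conversely every such path yields a solution. Each individual edge is non-erasing and expands lengths by at most a constant factor.

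For \prref{sotai} I observe that the solution set of $\pi(\cS)$ is infinite \IFF the solution graph contains a \emph{useful cycle}, \ie a cycle that is both reachable from the initial vertex and co-reachable to a terminal vertex: traversing such a cycle an arbitrary number of times produces pairwise distinct solutions, since each passage strictly increases the total length of the word read off. Deciding whether a useful cycle exists is a reachability-and-cycle question on a graph whose vertices have polynomially bounded descriptions; by Savitch's theorem this can be carried out in $\PSPACE$ on the fly, without ever materializing the (exponential-size) graph. If no useful cycle exists, the part of the graph relevant to solutions is acyclic, so every initial-to-terminal path has length bounded by the number of vertices, \ie singly exponentially. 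Composing singly exponentially many \morphs, each of constant expansion factor, bounds the length of every solution $\gam$ by $c^{2^{p(n)}}$ for a polynomial $p$, which is doubly exponential; this yields the length bound.

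For \prref{sotaii}, under the promise that $\pi(\cS)$ has finitely many solutions, the useful part of the solution graph is acyclic, so the number of initial-to-terminal paths, and a fortiori the number of distinct solutions $\gam$, is at most doubly exponential, while each $\gam$ has doubly exponential length by \prref{sotai}. I would enumerate these solutions one by one and, for each $\gam$, apply \prref{thm:mainlift} to the pair $(\cS,\gam)$, which decides in time singly exponential in $\Abs{\cS}+\abs{\gam}$ whether $\gam$ lifts to some $\sig\colon \XX\to\FIM A$ with $\gam=\eta\circ\sig$. Since $\abs{\gam}$ is doubly exponential, each lifting test runs in triple exponential time, and running it over the at most doubly exponentially many candidates keeps the total within deterministic triple exponential time. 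Finally, $\cS$ is solvable over $\FIM A$ \IFF some enumerated $\gam$ lifts, because every solution of $\cS$ projects under $\eta$ to a solution of $\pi(\cS)$.

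The main obstacle is the careful bookkeeping of the tower of exponentials so that the final bound is exactly triple and no higher. This forces two quantitative points in \prref{sotai}: the constant-expansion analysis of the edge \morphs, which is what keeps the length of group solutions doubly (rather than triply) exponential, and the observation that the acyclic solution graph carries at most doubly exponentially many source-to-terminal paths, so that multiplying this count by the singly-exponential cost of the lifting test — evaluated on doubly-exponential-length input — does not exceed triple exponential time. A secondary point requiring care is that the $\PSPACE$ analysis of the exponential-size solution graph must be performed on the fly, since the graph cannot be stored explicitly within polynomial space.
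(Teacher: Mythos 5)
Your proposal is correct and follows essentially the same route as the paper: part (\ref{sotai}) is exactly what the paper cites from \cite{DiekertJP2014csr} (you merely unpack the solution-graph argument that the paper delegates to that citation), and part (\ref{sotaii}) is the paper's enumerate-and-lift argument via \prref{thm:mainlift}, with the same exponent bookkeeping, which the paper compresses into the remark that the square of a triple exponential function is again triple exponential. Your count of at most doubly exponentially many candidate solutions is in fact slightly tighter than the paper's triple-exponential bound on the size of the solution set, but either bound suffices for the stated complexity.
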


\begin{proof}
The statement \ref{sotai} follows {} from \cite{DiekertJP2014csr}.
In particular, the size of the set of all solutions is at most triple exponential. Since the square of a triple exponential function is 
triple exponential again, 
the statement \ref{sotaii} follows  from \prref{thm:mainlift}. 
\end{proof}

\section{Solving equations in idempotent variables}
\label{sec:slefg}
\prref{thm:marie} is the main result of the paper.
We split its proof into two sections. \prref{sec:ub} shows the membership 
to $\DEXPTIME$. \prref{sec:lbmarie} shows $\DEXPTIME$-hardness for systems with two  equations.  \prref{thm:marie} improves via \prref{thm:mainlift} the result 
  \cite[Thm.~8]{DMS07ijac}, which was derived from Rabin's Tree Theorem leading to a super-exponential complexity. It improves the main result of \cite{DiekertMSScsr15} since it also shows the conjecture that solving equations in idempotent variables is $\DEXPTIME$-complete.

\begin{theorem}\label{thm:marie}
The following problem can be decided in $\DEXPTIME$.

{\bf Input.} A system $\cS$ of equations in idempotent variables
(\ie without any reduced variable). 

{\bf Question.} Does $\cS$ have a solution in $\FIM A$?

Moreover, if 
$A$ contains four  pairwise different letters $\alp, \ov \alp, \bet, \ov \bet$, then the problem is $\DEXPTIME$-hard for systems with two equations. 
\end{theorem}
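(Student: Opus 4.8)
The strategy is to reduce solvability of a system of equations in idempotent variables over $\FIM{A}$ to a system of language equations with one-sided concatenation over a free group, and then invoke \prref{thm:bn01}. The key observation is that an idempotent variable $Z$ must be mapped by any solution to an element of the form $(P,1)$, where $P$ is a finite prefix-closed subset of $\FA$ containing $1$. Thus the ``content'' of a solution for $Z$ is exactly a finite subset of $\FA$, which is precisely the data that a language-equation solution $\sig(Z) \sse \FA$ records. First I would fix a solution of the underlying group equation (or, in the pure idempotent-variable setting, observe that the group projection is automatically the trivial word problem since each idempotent variable projects to $1$). Then I would translate the condition $\psi(\sig(U)) = \psi(\sig(V))$ into an equality of prefix-closed subsets of $\FA$. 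Recall that for a word $W$ over $A \cup \OO$, the first component of $\psi(\sig(W))$ is obtained by taking prefixes along the ``spelling'' of $W$: each constant $a$ contributes $\{1,a\}$ translated by the group element read so far, and each idempotent variable $Z$ contributes the translated copy $g\cdot\sig(Z)$ where $g = \pi(\text{prefix of }W)$. Because the group values are fixed (all variables are idempotent, hence project to $1$, so the relevant translations $g$ are determined by the constants), each such contribution has the form $u_{ki}\sig(Z_i)$ for an explicit reduced word $u_{ki} \in \FA$. This is exactly the shape of \prref{eq:ggs0} with solutions interpreted in $M = \FA$.

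\textbf{Assembling the language equation.} Concretely, for each equation $(U,V) \in \cS$ I would write the first component of $\psi(\sig(U))$ as $L_U + \sum_i u_i\,\sig(Z_i)$, where $L_U$ collects the (constant) prefix-contributions coming from the letters of $U$ and the sum ranges over occurrences of idempotent variables; similarly for $V$. The identity $\psi(\sig(U)) = \psi(\sig(V))$ in $\FIM{A}$ amounts to equality of the group components (which is guaranteed, since both project trivially, so $\pi(U) = \pi(V)$ holds as reduced words — this must be checked as a necessary precondition) together with equality of the prefix-closed first components as subsets of $\FA$. The latter is precisely a language equation of the form \prref{eq:ggs0} over $M = \FA$. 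Doing this uniformly over all equations of $\cS$ yields a system of language equations whose solutions in finite subsets of $\FA$ correspond bijectively to solutions of $\cS$. The size bound is polynomial: each letter of each equation contributes a bounded amount of coefficient data, and the words $u_i$ have length at most the length of the equation. Since \prref{thm:bn01} (and \prref{rem:bn7.6}, which reduces a system to a single equation) gives a $\DEXPTIME$ decision procedure for language equations with coefficients over $B$, and the free-group version follows by the same method as noted in the paragraph preceding \prref{thm:bn01}, membership in $\DEXPTIME$ follows.

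\textbf{The main obstacle and the prefix-closure subtlety.} The delicate point is that a language-equation solution $\sig(Z_i)$ is an \emph{arbitrary} finite subset of $\FA$, whereas the first component of an idempotent element of $\FIM{A}$ must be \emph{prefix-closed} and contain $1$. So the correspondence is not immediate: I must ensure that the prefix-closure and basepoint conditions are either automatically enforced or can be added to the system. The cleanest way is to force each $\sig(Z_i)$ to be prefix-closed and to contain $1$ by adding, for each generator $a \in A$ and each variable $Z_i$, auxiliary equations expressing $\pref(\sig(Z_i)) = \sig(Z_i)$ and $1 \in \sig(Z_i)$; alternatively one observes that when all first components on both sides of an equation arise as translated copies of prefix-closed sets glued along a connected spelling, the resulting equated sets are automatically prefix-closed, so the map $\sig(Z_i) \mapsto \pref(\sig(Z_i) \cup \{1\})$ sends any language solution to a genuine $\FIM{A}$-solution while not changing satisfiability. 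I expect reconciling these two descriptions — making the prefix-closure condition expressible within the one-sided-concatenation language-equation format without blowing up the size — to be the technical heart of the argument.

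\textbf{The lower bound.} For $\DEXPTIME$-hardness with two equations and $|A| \geq 4$, the plan is to reduce from the $\DEXPTIME$-hard direction of \prref{thm:bn01}, i.e. from satisfiability of language equations with coefficients over a two-letter alphabet $B = \{\alp,\bet\}$ in the free monoid $B^*$. The idea is to run the upper-bound correspondence in reverse: encode a given hard language equation as a system of equations in idempotent variables over $\FIM{A}$ with $A = \{\alp,\ov\alp,\bet,\ov\bet\}$, using the reduced-word embedding $B^* \hookrightarrow \FA$ (positive words over $\{\alp,\bet\}$ are already reduced, so no cancellation interferes) to realize finite subsets of $B^*$ as first components of idempotents. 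The two-equation bound comes from \prref{rem:bn7.6}: a single language equation suffices on the Baader–Narendran side, and translating one language equation over $\FIM{A}$ produces a bounded number of $\FIM{A}$-equations, which can be merged down to two using the same prefix-code summation trick as in \prref{rem:bn7.6}. The crux here is to confine all coefficients to positive words so that working in $\FA$ coincides with working in $B^*$, thereby transporting the free-monoid hardness of \prref{thm:bn01} into the inverse-monoid setting; verifying that this encoding is a genuine (logspace, hence polynomial-time) reduction preserving satisfiability is what must be checked carefully.
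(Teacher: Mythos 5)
Your overall architecture (translate idempotent equations into language equations with one-sided concatenation, invoke Baader--Narendran in both directions) matches the paper's, but there are two genuine gaps. For the upper bound, you assert that ``the free-group version follows by the same method as noted in the paragraph preceding \prref{thm:bn01}''. No such version is available: \prref{thm:bn01} is proved only for solutions in the free monoid $B^*$, and the paragraph before it merely generalizes the \emph{definition} of solution to $M=\FA$, not the theorem. Over $\FA$ a term $u\,\sig(X)$ involves cancellation, so the system is not genuinely one-sided concatenation in the free-monoid sense, and bridging this is where the real work lies. The paper's proof first does what you do --- rewrite each idempotent equation, via $pZ=pZ\ov p p$, into a language equation $L+\sum_i \wh{p_i}X_{i+1}=K+\sum_j \wh{q_j}Y_{j+1}$ over $\FA$ in nonempty, finite, prefix-closed sets (\prref{lem:from_idempotents_to_subsets}) --- but then spends the bulk of the argument on a nondeterministic polynomial-time procedure that eliminates cancellation: it normalizes to equations of type $X=uY+\pref(u)$ and, in rounds, for $u=va$ guesses whether $\ov a\in\sig(Y)$; if so it splits $Y=Y'+\ov a Y''+\pref(\ov a)$ and replaces the equation by $X=X'+X''$, $X'=vaY'+\pref(va)$ (now valid over $A^*$) and $X''=vY''+\pref(v)$, iterating until every equation holds over the free monoid $A^*$; the singly exponentially many outcomes all fit in $\DEXPTIME$. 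By contrast, the prefix-closure/basepoint issue you single out as ``the technical heart'' is dispatched cheaply in the paper: the syntactic types $X=1$, $X=Y+Z$, $X=uY+\pref(u)$, $X=1+X$ are stable under replacing a solution by its prefix closure, so the side condition can simply be dropped, and \prref{lem:pretopre} transports prefix-closedness through reduction. (Note also that your proposed auxiliary equations expressing $\pref(\sig(Z))=\sig(Z)$ are not expressible in the one-sided-concatenation format, since prefixes live on the right.)

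For the lower bound, the skeleton (run the correspondence in reverse from the hard case of \prref{thm:bn01}) is right, but your crux --- ``confine all coefficients to positive words so that working in $\FA$ coincides with working in $B^*$'' --- fails: positive coefficients do not constrain the \emph{variables}, whose values over $\FA$ may contain the inverse letters $\ov\alp,\ov\bet$, and cancellation against positive coefficients can then create spurious group solutions. This is exactly the direction ($\FIM{A}$-solvable implies language-equation solvable over $B^*$) that your sketch flags but leaves open, and the paper needs two specific devices for it (\prref{prop:sys3}). First, besides the single Baader--Narendran equation obtained via \prref{rem:bn7.6}, it adds one control equation $1+Z+\sum_{a\in B}aZ+\sum_{k}X_k=1+\sum_{a\in B}aZ$, which forces $Z$ to be suffix-closed with $Z\sse B^*$ and all $X_k\sse Z$; a maximal-length argument (if $\ov b u\in\sig'(X)$ with $\abs u$ maximal, the control equation produces $\ov a\,\ov b\,u\in\sig'(Z)$, a contradiction) then shows any $\FA$-solution is already an $A^*$-solution. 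Second, \prref{lem:transfeq} (pad with a letter $d$: replace $X_0=1$ by $X_0=1+d$ and relax each inequality by adding the constant $a_{1k}$) guarantees that solvability implies solvability in nonempty prefix-closed sets, which is needed so that the final encoding $W(T)=\prod_{u\in L}u\ov u\cdot\prod_i u_iX_i\ov{u_i}$ into exactly two $\FIM{A}$-equations (\prref{lem:marieade}) preserves satisfiability in both directions. Without these two ideas your reduction does not preserve satisfiability in the crucial direction, so the hardness claim is not established as proposed.
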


\subsection{Upper-bound: containment in $\DEXPTIME$}\label{sec:ub}
This section proves the upper bound mentioned in \prref{thm:marie}. 
We begin with the following lemma. 
\begin{lemma}\label{lem:from_idempotents_to_subsets}

There is a polynomial time algorithm for the following computation. 

{\bf Input.} A finite alphabet with involution $A$ and an equation \begin{equation}\label{eq:start}
u_0 X_1 u_1  \cdots X_{g}u_g = 
v_{0} Y_{1} v_1 \cdots Y_{d}v_d, 
\end{equation}
where the $X_i$'s and $Y_j$'s are 
idempotent variables and such that  the identity $u_0 \cdots u_g = 
v_0 \cdots v_d$ holds in the group $\FA$. 

{\bf Output.} A language equation  which is solvable in nonempty, finite, prefix-closed subsets of $\FA$ \IFF \prref{eq:start} is solvable in $\FIM{A}$.
\end{lemma}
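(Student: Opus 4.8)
The plan is to evaluate both sides of \prref{eq:start} explicitly in $\FIM A$ and to read the required language equation off the first components of the two resulting pairs. The first step is to pin down the idempotents. An element $(P,g)\in\FIM A$ is idempotent \IFF $(P\cup gP, g^2)=(P,g)$, which forces $g=g^2=1$ in $\FA$; hence the idempotents are exactly the pairs $(P,1)$ with $P$ a finite prefix-closed subset of $\FA$. Since a prefix-closed set is nonempty \IFF it contains $1$ (the empty word is a prefix of every word), these are precisely the pairs $(P,1)$ with $P$ a nonempty finite prefix-closed subset of $\FA$. Because $\psi$ is onto, a solution $\sig$ making all $\sig(X_i),\sig(Y_j)$ idempotent is the same datum as a choice of nonempty finite prefix-closed sets $P_i,Q_j\sse\FA$, namely the first components of $\psi(\sig(X_i))$ and $\psi(\sig(Y_j))$.

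Next I would record that for any word $w\in A^*$ one has $\psi(w)=(\wh{\pref(w)},\wh w)$, where $\wh{\pref(w)}=\set{\wh p}{p\text{ is a prefix of }w}$ is the finite prefix-closed set of reduced partial products. Using the multiplication rule $(P,g)(Q,h)=(P\cup gQ,gh)$, writing $c=u_0u_1\cdots u_g$ and introducing the offsets $s_i=\wh{u_0\cdots u_{i-1}}$ (the group element reached just before the $i$-th idempotent variable), an induction on the number of factors gives
$$\psi\bigl(\sig(u_0X_1u_1\cdots X_gu_g)\bigr)=\Bigl(\wh{\pref(c)}\cup\bigcup_{i=1}^{g}s_iP_i,\ \wh c\Bigr),$$
and symmetrically for the right-hand side with $c'=v_0\cdots v_d$, offsets $t_j=\wh{v_0\cdots v_{j-1}}$ and sets $Q_j$.

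The two second components are $\wh c$ and $\wh{c'}$, which coincide by the standing hypothesis $u_0\cdots u_g=v_0\cdots v_d$ in $\FA$; so \prref{eq:start} holds in $\FIM A$ \IFF the two first components are equal. This is exactly the language equation
$$\wh{\pref(c)}+\sum_{i=1}^{g}s_iX_i=\wh{\pref(c')}+\sum_{j=1}^{d}t_jY_j$$
over $M=\FA$ in the sense of \prref{eq:ggs0}, and by the first paragraph a substitution of its variables by nonempty finite prefix-closed subsets of $\FA$ is a solution \IFF the corresponding idempotents solve \prref{eq:start}. This equation is the output. Its constant sets $\wh{\pref(c)},\wh{\pref(c')}$ and its coefficients $s_i,t_j$ are sets of reduced words, respectively reduced words, obtained by reading each constant word once and reducing its prefixes, so the whole construction runs in polynomial time.

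The only genuine computation is the displayed product formula; once the offsets $s_i,t_j$ are set up correctly it is a routine induction on the multiplication rule, so I do not expect a real obstacle there. The point worth flagging is instead the elementary equivalence ``nonempty'' $\iff$ ``contains $1$'' for prefix-closed sets: this is exactly what makes the restricted solution class demanded in the Output match the idempotents of $\FIM A$ on the nose, and hence what keeps the reduction faithful in both directions.
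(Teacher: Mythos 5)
Your proof is correct and arrives at exactly the paper's output: your offsets satisfy $s_{i+1}=\wh{p_i}$ for $p_i=u_0\cdots u_i$, and your constants $\wh{\pref(c)}$, $\wh{\pref(c')}$ are the paper's $L$ and $K$, so the language equation you produce is literally equation~(\ref{eq:ssfrg}). The mechanism differs, though. The paper first normalizes the equation \emph{inside the inverse monoid}: it uses the identity $pZ=pZ\,\ov{p}\,p$ (valid for any idempotent $Z$ and any element $p$ in any inverse monoid) to rewrite each side as a product of conjugates $p_iX_{i+1}\ov{p_i}$, then invokes the hypothesis $p_g=q_d$ in $\FA$ together with the commutation of idempotents to reach the canonical form~(\ref{eq:idempotents_and_freeg}), and only at the very end appeals to Scheiblich's presentation to read off the language equation. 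You instead work in the Scheiblich model from the outset, computing $\psi(w)=(\wh{\pref(w)},\wh{w})$ and unwinding the multiplication rule $(P,g)(Q,h)=(P\cup gQ,gh)$ by induction, so the union-of-translates formula for the first component drops out directly; the symbolic rewriting and the commutation of idempotents become invisible because they are absorbed into the commutativity of union. Your version is more computational and self-contained; the paper's version isolates the representation-free inverse-monoid identities driving the reduction, which makes the intermediate normal form reusable without reference to the concrete model. A point you handled explicitly that the paper leaves implicit, and rightly flagged: the idempotents of $\FIM{A}$ are precisely the pairs $(P,1)$ with $P$ finite, nonempty and prefix-closed, and for prefix-closed $P$ one has $P\neq\es$ \IFF $1\in P$ --- this is exactly what matches the restricted solution class demanded in the Output and keeps the reduction faithful in both directions.
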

\begin{proof} Define for $0\leq i \leq g$ and $0\leq j \leq d$ the words 
$$p_i=u_0 \cdots u_i,\;\;q_j=v_0 \cdots v_j \in A^*.$$
In every inverse monoid we have $pZ=pZ\ov{p}p$ for every idempotent $Z$ and every element $p$. 
Since $p_{i-1} u_i = p_i$ and $q_{j-1}v_j = q_j$, the equation in idempotent variables
(\ref{eq:start}) can be rewritten as:
\begin{multline}
\label{eq:conjugates}
p_0 X_1 \ov{p_0} \cdots p_iX_{i+1}\ov{p_i} \cdots p_{g-1}X_g\ov{p_{g-1}} \cdot p_g \ov{p_{g}}p_g= \\
q_0 Y_1 \ov{q_0}\cdots q_jY_{j+1}\ov{q_j} \cdots q_{d-1}Y_d\ov{q_{d-1}} \cdot q_{d}\ov{q_{d}}q_d.
\end{multline}
By hypotheses we have $p_g= q_d$ in $\FA$. Moreover, idempotents commute. Hence, 
\prref{eq:conjugates} is equivalent in $\FIM{A}$ with the following equation
\begin{equation}
\label{eq:idempotents_and_freeg}
p_g \ov{p_{g}} \cdot\prod_{i=0}^{g-1}p_i X_{i+1}\, \ov{p_i} = q_d \ov{q_{d}}
\cdot\prod_{j=0}^{d-1}q_j Y_{j+1} \,\ov{q_j}.\end{equation}
Each value of $X_i$, resp.~$Y_j$, in $\FIM{A}$ has the form $(P_i,{1})$, resp.~$(Q_j,{1})$, for 
non-empty, prefix-closed, and finite subsets $P_i$ and $Q_j$  of $\FA$.

Recall that $\wh u$ refers to the reduced word $\pi(u) \in \FA\sse A^*$. 
Define $L= \set{\wh p}{p\in \pref(p_g)}$ and $K= \set{\wh q}{q\in \pref(q_d)}$.
Then the output of the algorithm is the language equation where the solutions $X_i$'s and $Y_j$'s are required to be nonempty, finite, prefix-closed subsets of $\FA$: 
\begin{equation}
\label{eq:ssfrg}
L + \sum_{i=0}^{g-1}\wh{p_i} X_{i+1} = 
K + \sum_{j=0}^{d-1}\wh{q_j} Y_{j+1}.
\end{equation}
It follows from the construction and Scheiblich' s presentation of free inverse monoids that $\sig(X_i)= (P_i,{1})$ and $\sig(Y_j)= (Q_j,{1})$
solves \prref{eq:start} in $\FIM{A}$ \IFF 
$\sig'(X_i)= P_i$ and $\sig'(Y_j)= Q_j$ solves \prref{eq:ssfrg} in $\FA$. 
Hence, the lemma.
 \end{proof}
 
We also make use of the following easy observation.
\begin{lemma}\label{lem:pretopre}
 Let $P\sse A^*$ be  prefix-closed and 
 $\wh P = \set{\wh p}{p \in P}$ the corresponding set of reduced words. 
 Then $\wh P$ is prefix-closed. 
 \end{lemma}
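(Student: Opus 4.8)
The plan is to prove the last stated result, namely \prref{lem:pretopre}: if $P\sse A^*$ is prefix-closed, then $\wh P = \set{\wh p}{p \in P}$ is prefix-closed in $\FA$. Recall that prefix-closedness in $\FA$ is understood via the identification of $\FA$ with the set of reduced words in $A^*$, so the goal is: whenever $r \in \wh P$ and $s \leq r$ is a prefix of $r$ (as reduced words), then $s \in \wh P$ as well.

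First I would take an arbitrary reduced word $r \in \wh P$, so $r = \wh p$ for some $p \in P$, and an arbitrary prefix $s$ of $r$, say $r = st$ with $s$ reduced (every prefix of a reduced word is reduced). The task is to exhibit some $p' \in P$ with $\wh{p'} = s$. The key observation is that the reduction process $u \mapsto \wh u$ proceeds by erasing factors $a\ov a$, and such an erasure never creates new ``left-to-right structure'' out of nothing: more precisely, I would argue that any prefix of the reduced word $\wh p$ is itself the reduction of some prefix of $p$. This is the crux of the argument.

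To make that precise, I would prove the following claim by induction on $\abs p$: for every $p \in A^*$ and every prefix $s \leq \wh p$, there is a prefix $p'$ of $p$ with $\wh{p'} = s$. In the base case $p = 1$ this is immediate. For the inductive step write $p = p_1 a$ with $a \in A$. There are two cases according to the standard computation of $\wh{p_1 a}$: either $\wh{p_1}$ does not end in $\ov a$, whence $\wh p = \wh{p_1}\, a$; or $\wh{p_1} = w\ov a$ and $\wh p = w$ (cancellation). In the first case, a prefix $s$ of $\wh{p_1}a$ is either a prefix of $\wh{p_1}$, handled by the induction hypothesis applied to $p_1$ (and a prefix of $p_1$ is a prefix of $p$), or equals $\wh{p_1}a = \wh p$, handled by taking $p' = p$. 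In the second (cancelling) case, any prefix $s$ of $\wh p = w$ is also a prefix of $\wh{p_1} = w\ov a$, so the induction hypothesis for $p_1$ yields a prefix $p'$ of $p_1$, hence of $p$, with $\wh{p'} = s$. This establishes the claim.

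Finally I would conclude the lemma from the claim: given $r = \wh p \in \wh P$ with $p \in P$ and a prefix $s \leq r$, the claim supplies a prefix $p'$ of $p$ with $\wh{p'} = s$; since $P$ is prefix-closed, $p' \in P$, and therefore $s = \wh{p'} \in \wh P$. I expect the main (and only real) obstacle to be the bookkeeping in the cancellation case of the induction, where one must be careful that the reduced word can shrink when appending a letter, so that a prefix of $\wh p$ need not visibly correspond to a prefix of $p$ without passing through $\wh{p_1}$; phrasing the induction in terms of ``some prefix of $p$ reduces to $s$'' rather than a fixed one is exactly what sidesteps this difficulty.
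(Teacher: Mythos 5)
Your proof is correct and takes essentially the same route as the paper: an induction on $\abs{p}$ that peels off the last letter, $p = p_1 a$, with the identical two-case analysis according to whether $a$ cancels against the last letter of $\wh{p_1}$ (so that $\wh p$ is a prefix of $\wh{p_1}$) or not (so that $\wh p = \wh{p_1}\,a$). The only difference is organizational: you isolate the $P$-free claim that every prefix of $\wh p$ is the reduction of some prefix of $p$ and invoke prefix-closedness of $P$ once at the end, whereas the paper runs the same induction directly on the statement about $\wh P$.
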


 \begin{proof}
 Let $p \in P$ and $\wh p \in \wh P$ its reduced form. 
 We have to show that every prefix of $\wh p$ belongs to $\wh P$.
 For $p=  1$ this is trivial. Hence, let $p = qa$ with $a\in A$ and $\wh q$ the reduced form of $q$. We have $q \in P$ and, 
 by induction, every prefix of $\wh q$ belongs to $\wh P$.
 Now, if $\wh p$ is a prefix of $\wh q$, we are done. In the other case we have
 $\wh p = \wh q a$. Since $\wh q, \wh p \in \wh P$ we are done  again. 
 \end{proof}

Let us finish to prove that solving equations in idempotent variables over the
free inverse monoid belongs to  $\DEXPTIME$.
 
The input in \prref{thm:marie} is given by a system $\cS$ of equations in idempotent variables over 
a free inverse monoid $\FIM{A}$. 
 Every equation $(U,V)\in \cS$ can be written as in \prref{eq:start}. That is:
 \begin{equation}\label{eq:2start}
u_0 X_1 u_1  \cdots X_{g}u_g = 
v_{0} Y_{1} v_1 \cdots Y_{d}v_d. 
\end{equation} 
In linear time we check that all equations
$u_0 \cdots u_g = 
v_0 \cdots v_d$ hold in the group $\FA$. 
 If one of these equalities is  violated then $\cS$ is not solvable and we can stop.

By Lemma \ref{lem:from_idempotents_to_subsets} it suffices to give a $\DEXPTIME$ algorithm 
for solving  systems of language equations over $\FA$ of the form (\ref{eq:ssfrg}) where the solutions $X_i$'s and $Y_j$'s are required to be nonempty, finite, prefix-closed subsets of $\FA$. 
Thus we may assume that we start with a system 
$\cS$ where every equation can be written as 
\begin{equation}\label{eq:helgaa}
L + \sum_{i\in I} u_iX_i = K + \sum_{j\in J} u_jY_j,
\end{equation}
where  $u_i \in L$ and $u_j \in K$  and $L\cup K$ consist of reduced words, only. 
We say that a solution $\sig: \OO \to 2^{A^*}$ is \emph{strong} if
 $\sig(X)$ consists of reduced words, only. That is $\sig(X) = \pi(\set{u\in A^*}{u \in \sig(X)})$. 
Clearly, $\cS$ has a  solution in $\FA$ \IFF it has a strong solution. 
 
 Next, we transform in deterministic polynomial time the system $\cS$ into a system $\cS_0$ where the equations have a  simple syntactic form. 
 We begin by introducing a fresh variable 
 $X_0$ and an equation $X_0=1$.
  In a second phase, we replace each equation  of type as in (\ref{eq:helgaa})
by two equations using a fresh variable $X_E$ and, since each $u_k \in L_K =\pref(L_K)$ as well as $X_0=1$,  we may define these  equations as follows:\begin{align*}\label{eq:hella}
X_E &= \sum_{u\in L}( uX_0 + \pref(u)) + \sum_{i\in I}( u_iX_i + \pref(u_i)), \\  
X_E&= \sum_{v\in K}( vX_0 + \pref(v)) + \sum_{j\in J} (u_jX_j + \pref(u_i)).
\end{align*}
Thus, there is an equation of the form  $X_0=1$ and a bunch of equations which have 
the form
\begin{align*}
X &= \sum_{k\in K}( u_kX_k + \pref(u_k)) \text{ with } K \neq \es.
\end{align*}
With the help of polynomially many additional fresh variables, it is now obvious that we can transform $\cS$ (with respect to satisfiability)  into an equivalent system $\cS_0$
 containing only three types of
 equations: 
\begin{enumerate}
\item $X= 1$, 
\item $X= Y+Z$, 
\item $X = uY + \pref(u)$, where $u$ is a reduced word.
\item $X= 1+X$ for all $X$. 
\end{enumerate}
The last type of equations $X=1+X$ makes sure that every solution is in nonempty sets containing the empty word. 
(This allows to ignore the restriction that $\sig(X) \neq \es$.)
Phrased differently, without restriction 
$\cS$ is of the form $\cS_0$ at the very beginning. 
At this point we start a nondeterministic polynomial time reduction. 
This means, if $\cS$ has a solution then at least one outcome of the 
nondeterministic procedure yields a solvable system $\cS'$ of language equations. If none of the possible outcomes is solvable then $\cS$ is not solvable.
During this procedure we are going to mark some equations and this forces us to  define the notion of 
solution for systems with marked equations. A \emph{(strong) solution}  
is defined as  a mapping $\sig$ such that each
$\sig(X)$ is given by a  prefix-closed set of (reduced) words in $A^*$ such that all  equations hold as language equations
over
 $\FA$, but all marked equations hold as language equations
over $A^*$ as well. (Thus, we have   a stronger condition
for marked equations.) 
We can think of an ``evolution'' of language
equations over $\FGRI A$  
to language equations over the free monoid $A^*$, and in the middle 
during the evolution we have a mixture of both interpretations. 

Initially we mark all equations of type  $X= 1$, $X= 1 +X$, and $X= Y+Z$. This is possible because  
we may start with a strong solution in nonempty, prefix-closed and finite sets, if $\cS$ is solvable.  

Now we proceed in rounds until all equations are marked. 
We start a round, if  some of the equations $X = uY + \pref(u)$ is not yet marked. If $u = 1$ is the empty word we simply mark that equation, too. Hence we may assume $u \neq 1$ and we  may write $u=va$ with $a \in A$. 
Nondeterministically we guess whether there exists a strong 
solution $\sig$ such that $\ov a \in \sig(Y)$. 

If our  guess is ``$\ov a \notin \sig(Y)$'', then we mark the equation 
$X = vaY + \pref(va)$. 
If the guess is true then marking is correct because then $vaw$ is reduced for all $w \in \sig(Y)$.
Whether or not $\ov a \notin \sig(Y)$ is true, marking an equation 
never introduces new solutions. Thus, a wrong guess does not transform an unsatisfiable system into a satisfiable one.
Hence, it is enough to consider the other case that the guess is  ``$\ov a \in \sig(Y)$'' for some strong solution $\sig$.
In this case we introduce two fresh variables $Y'$, $Y''$ and a new 
marked equation
\begin{equation*}
Y = Y' + \ov a Y'' + \pref(\ov a).
\end{equation*}
If $\ov a \in \sig(Y)$ is correct then we can extend the strong
solution so that
$\ov a \notin \sig(Y')$. If $\ov a \in \sig(Y)$ is false then, again, this step does not introduce any new solution.

Finally, we replace 
the equation $X = vaY + \pref(va)$ by the following three equations, the first two of them are marked and the variables $X'$, $X''$ are fresh
\begin{align*}
X &= X' +X'' &\text{ (marked), } \\
X' &= vaY' + \pref(va) &\text{ (marked), }\\
X'' &= vY'' + \pref(v).
\end{align*}
If the guess ``$\ov a \in \sig(Y)$'' was correct, then the new system has a strong solution.
If the new system has any solution then the old system has a solution because $X'' = vY'' + \pref(v)$ is  unmarked as long as $v \neq 1$. 
After polynomial many rounds all equations are marked. This defines the new system $\cS'$. 
If  $\cS'$  has a solution $\sig'$ then the restriction of $\sig'$ to the original variables  is also a solution of the original system $\cS$. 
If all our guesses were correct with respect to 
a strong solution $\sig$ of $\cS$ then $\cS'$ has a strong solution $\sig'$ such that 
 $\sig$ is the restriction of $\sig'$  to the original variables.
 Hence,  $\cS$ has a  solution \IFF $\cS'$ has a solution.  

It is therefore enough to  consider the system $\cS'$ of language equations over $A^*$.
All the equations are still of one of the types above.  
Let $\sig$ be any mapping from variables in $\cS'$ to finite languages 
of $A^*$, \ie  $\sig(X)\sse A^* $ denotes an arbitrary finite language for all variables. Then we have 
the following implications. 
\begin{itemize} 
\item $\sig(X))= 1$ implies $\pref(\sig(X))= 1$.
\item $\sig(X)= \sig(Y)+\sig(Z)$ implies $\pref(\sig(X))= \pref(\sig(Y))+\pref(\sig(Z))$. 
\item $\sig(X) = u\sig(Y)  + \pref(u)$ implies $\pref(\sig(X)) = u\pref(\sig(Y)) + \pref(u)$. 
\item $\sig(X) = 1 + \sig(X) \iff 1 \in \sig(X) \iff 1 \in \pref(\sig(X))$.
\end{itemize}
Thus, the system $\cS'$ of language equations over $A^*$ has a solution \IFF $\cS'$ has a language solution in nonempty, finite, and prefix-closed sets.

In order to finish the proof, let us briefly repeat what we have done so far. The input has been a 
system $\cS$ of equations over $\FIM A$ in idempotent variables.
If $\cS$ has a solution then it has a strong solution and making all guesses correct we end up with a system $\cS'$ of language equations over $A^*$ which has a strong solution in finite and  prefix-closed sets.
Conversely, consider some system $\cS'$ which is obtained by the 
nondeterministic choices. (Note that the number of different 
systems $\cS'$ is bounded by a singly exponential function and  
$\DEXPTIME$ is enough time to  calculate a list containing all $\cS'$.)
Assume that $\cS'$ has a solution 
$\sig'$ in finite subsets of $A^*$. Due to the syntactic structure of $\cS'$
there is also a solution $\sig$ in  nonempty and prefix-closed subsets of $A^*$. This is due to the three implications above. 
Using \prref{lem:pretopre}, $\sig$ solves $\cS$ as a system of language equations 
over the group $\FGRI A$ in nonempty and prefix-closed subsets of reduced words. Hence,  $\sig$ solves the original system over the free inverse monoid 
$\FIM A$. Since the square of a singly exponential function is singly exponential, it is enough to apply \prref{thm:bn01}. 
\qed

\subsection{Proof of the lower bound in Theorem~\ref{thm:marie}}
\label{sec:lbmarie}
Throughout this section we work over a two letter alphabet 
$B= \os{\alp, \bet}$ which is embedded in the alphabet
$A = \os{\alp, \ov \alp, \bet, \ov \bet}$ with involution without fixed points. Thus, $\abs A = 4$. 

We show that the problem of solving a system of equations in idempotent 
variables over $\FIM{A}$ is $\DEXPTIME$-hard, even if we 
restrict input to systems with two equations. The first part of this lower bound 
proof is about a surgery on language equations. It is the main ingredient, although free inverse monoids do not appear in that part.

\subsubsection{Surgery: {}from solutions in $B^*$ to solutions in $\FA$}\label{sec:nf}
This section contains a sequence of transformations for language 
equations. 
We say that systems $\cS$ and $\cS'$ of language equations 
are 
\emph{sat-equivalent}, provided $\cS$ is solvable \IFF $\cS'$ is solvable.

We consider the equations 
\begin{equation}\label{eq:sat1}
L + \sum_{i\in I} u_{i}X_i= K + \sum_{j\in J} u_{j}X_j
\end{equation}
with coefficients over $B$ and where variables represent finite subsets
of $B^*$. Clearly, if there is a solution over the free monoid $A^*$, then there is also a solution over $B^*$, because the coefficients are over $B$. 
Hence, for sat-equivalence it is enough to consider solutions in finite subsets of $A^*$.

With the help  of a fresh variable $X_0$ each equation 
as in (\ref{eq:sat1}) can be replaced  by the 
following system:
\begin{align*}
X_0 &= 1,\\
\sum_{u\in L}uX_0 + \sum_{i\in I} u_{i}X_i&= \sum_{v\in K}vX_0 + \sum_{j\in J} u_{j}X_j.
\end{align*}
If a term $uX$ appears in a system with  $\abs u \geq 2$, then we write $u=av$ with  $a \in {B}$ and we introduce 
a fresh variable $[{vX}]$.  We replace $uX$ everywhere by $a[{vX}]$; and we add the  equation $[{vX}] = vX$. We can repeat the process until all terms $uX$ satisfy $\abs u \leq 1$. The transformation produces a \sateq system of quadratic size in the original system.
With the help of more fresh variables we can proceed to 
have the following form 

\begin{align*}
X_0 &= 1, \\
X_{1k} & = a_{2k}X_{2k} + a_{3k}X_{3k} \quad \text{ for } 1\leq k \leq n.
\end{align*}
Here, $n \in \N$,  $a_{ik} \in B^*$ 
have length at most $1$, and $X_{ik} \in \OO$.

Next, it is convenient to allow the sign $\leq$  in addition to 
$=$ in the notation of equations. 
More formally, $L\leq R$ denotes the short hand of the language equation 
$L+R = R$.
Vice versa we can identify $L=R$ with the system 
\begin{align*}
L&\leq R,\\
R  &\leq L.
\end{align*}
For example, by letting $a=1$ the equation $X= bY + cZ$ 
is equivalent to the following system where all equations are written in a uniform way. 
\begin{align*}
aX& \leq bY + cZ,\\
bY& \leq aX + aX,\\
cZ& \leq aX + aX.\\
\end{align*}


The transformations above show that on input $\cS$ we can produce in polynomial time a \sateq system $\cS_1$ which can be written as: 
\begin{align}\label{eq:ggs1nf}
X_0 &= 1, \\
a_{1k}X_{1k} & \leq a_{2k}X_{2k} + a_{3k}X_{3k} \quad \text{ for } 1\leq k \leq n.\label{eq:ggs2nf}
\end{align}
As above, $n \in \N$,  $a_{ik} \in B^*$ 
have length at most $1$, and $X_{ik} \in \OO$. 

The next transformation yields a \sateq system which has a \solu \IFF 
it has a \solu in nonempty and prefix-closed sets.

For this we choose some letter $d \in B$ 
and we transform $\cS_1$ into a system $\cS_2$
as follows. We replace the equation $X_0 = 1$ in (\ref{eq:ggs1nf}) by: 
\begin{align}\label{eq:ggs11nf}
X_0 &= 1 + {d}.
\end{align}
Moreover, we replace each equation of type $a_{1k}X_{1k}  \leq a_{2k}X_{2k} + a_{3k}X_{3k}$ in (\ref{eq:ggs2nf})
by:
\begin{align}\label{eq:ggs21nf}
a_{1k}X_{1k} &\leq  a_{1k} + a_{2k}X_{2k} + a_{3k}X_{3k}\end{align}

\begin{lemma}\label{lem:transfeq}
The systems $\cS_1$ and $\cS_2$ are \sateq. Moreover, 
if  $\cS_2$ has any solution, then it has a \solu in nonempty prefix-closed sets.
\end{lemma}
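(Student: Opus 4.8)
The plan is to prove the two assertions separately, starting with the easier structural claim and then establishing \sateqce.

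\medskip

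First I would prove the second assertion: that if $\cS_2$ has any solution, then it has a solution in nonempty prefix-closed sets. The key observation is that the equation $X_0 = 1 + d$ (\prref{eq:ggs11nf}) forces $1 \in \sig(X_0)$, and that the modified inequalities in (\prref{eq:ggs21nf}) all contain the coefficient $a_{1k}$ on the right-hand side, which guarantees $a_{1k} \in \sig(a_{1k}X_{1k})$ is consistent with membership requirements. The clean way to argue this is to show that the prefix-closure operation $\sig \mapsto \pref\circ\sig$ sends any solution of $\cS_2$ to another solution. Using the implications already collected in the upper-bound proof (prefix-closure commutes with $+$ and with left-multiplication by a single coefficient, and it preserves $\leq$), I would verify that $\pref(\sig)$ satisfies each equation: the equation $X_0 = 1+d$ survives because $\pref(1+d) = 1+d$, and each inequality survives because $\pref$ is monotone and distributes over the relevant operations. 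Nonemptiness follows since prefix-closure of a solution to $X_0 = 1+d$ contains $1$, and the $a_{1k}$ terms propagate this. The main point is that the extra summands $d$ and $a_{1k}$ were engineered precisely so that no solution can avoid the short words that make prefix-closure safe.

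\medskip

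Next I would establish that $\cS_1$ and $\cS_2$ are \sateq. The direction ``$\cS_2$ solvable $\Rightarrow$ $\cS_1$ solvable'' is the delicate one. Given a solution $\sig_2$ of $\cS_2$, it does \emph{not} directly solve $\cS_1$ because we have added the constants $d$ and $a_{1k}$ to the right-hand sides. The plan is to recover a solution of $\cS_1$ by \emph{subtracting off} a canonical prefix-closed ``padding'' set: I would define $\sig_1(X)$ by removing from $\sig_2(X)$ exactly those short words that were introduced by the modifications, or equivalently by intersecting with an appropriate language. Concretely, I expect that setting $\sig_1(X) = \sig_2(X) \setminus P_X$ for a suitable finite prefix-set $P_X$, or restricting to words of length exceeding the relevant thresholds, yields a solution of $\cS_1$. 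The reverse direction ``$\cS_1$ solvable $\Rightarrow$ $\cS_2$ solvable'' should be handled by starting from a solution of $\cS_1$ and \emph{adding} the prefix-closures $\pref(\sig_1(X))$, then checking that the augmented assignment satisfies the $+d$ and $+a_{1k}$ modifications; since prefix-closure only adds shorter words and the modifications on the right-hand sides supply exactly those words, the inequalities continue to hold.

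\medskip

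The hard part will be the ``$\cS_2 \Rightarrow \cS_1$'' direction, i.e.\ verifying that stripping the padding does not destroy the inequalities (\prref{eq:ggs2nf}). One must ensure that every word in $\sig_1(a_{1k}X_{1k})$ is still covered by the right-hand side $a_{2k}\sig_1(X_{2k}) + a_{3k}\sig_1(X_{3k})$ of $\cS_1$ once the constant term $a_{1k}$ has been removed; the risk is that a word previously accounted for only by the added constant now has no witness. By the second part of the lemma we may assume $\sig_2$ is prefix-closed, and the thresholds on word lengths should guarantee that any ``long enough'' word already lay in the genuine variable contributions, while the removed short words are exactly the padding. I would make this precise by a length or prefix argument showing the removed words are precisely $\pref$-generated and hence redundant for $\cS_1$, which does not demand them. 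This reduction of the bookkeeping to a clean length/prefix separation is the crux, and once it is in place both implications and the nonemptiness claim follow routinely.
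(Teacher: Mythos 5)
There is a genuine gap, and it concerns the central device of the paper's proof: the letter $d$ is not inert ``padding'' to be stripped afterwards by length or prefix considerations, it is an \emph{end-marker} that tags, inside a $\cS_2$-solution, exactly the words that constitute a $\cS_1$-solution. The paper's two maps are $\sig'(X)=1+\pref(\sig(X)d)$ (forward) and $\sig(X)=\set{u\in B^*}{ud\in\sig'(X)}$ (backward). Your forward map, ``adding the prefix-closures $\pref(\sig_1(X))$'', already fails on \prref{eq:ggs11nf}: from $\sig_1(X_0)=\os{1}$ you get $\pref(\os{1})=\os{1}\neq\os{1,d}$, and no amount of prefix-closing ever produces the letter $d$; appending $d$ \emph{before} closing is what makes $X_0=1+d$ hold. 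Your backward map is in worse shape: the words that must be discarded are neither short nor confined to any finite set $P_X$ determined by the system --- they are precisely the words not ending in $d$, which can be arbitrarily long. Concretely, take $\cS_1$ containing $X_0=1$ and $X\leq bX_0+bX_0$ (coefficient $a_{1k}=1$, with $b\neq d$), which forces $\sig_1(X)\sse\os{b}$. Then $\sig_2(X_0)=\os{1,d}$, $\sig_2(X)=\os{1,b,bd}$ is a nonempty \emph{prefix-closed} solution of $\cS_2$, so your appeal to the second part of the lemma does not help: any length threshold either keeps $bd$ (violating $\cS_1$) or kills $b$, whereas the marker-selection yields $\sig_1(X)=\os{b}$ at once. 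Its correctness is a short last-letter argument you never have to replace by bookkeeping: if $u\in\sig(X_{1k})$ then $a_{1k}ud$ lies in the right-hand side of \prref{eq:ggs21nf}; since $a_{1k}ud\neq a_{1k}$ the added constant is never the witness, and since the witness ends in $d$ it reads $a_{2k}vd$ with $vd\in\sig'(X_{2k})$ (or symmetrically), whence $a_{1k}u\in a_{2k}\sig(X_{2k})+a_{3k}\sig(X_{3k})$.

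Your first part also has a flaw. The closure claim itself is fine --- one can check, case by case on $\abs{a_{ik}}\leq 1$, that $\sig\mapsto\pref\circ\sig$ preserves \prref{eq:ggs11nf} and \prref{eq:ggs21nf}, and you correctly identify that the constant $a_{1k}$ is what saves the case $a_{1k}=1$ --- but nonemptiness does \emph{not} follow: $\pref(\es)=\es$, and nothing in an inequality $a_{1k}X_{1k}\leq\cdots$ forces the left-hand variable to be nonempty, so ``the $a_{1k}$ terms propagate this'' is unfounded (a solution with $\sig_2(X)=\es$ for every $X\neq X_0$ stays empty after closing). Your construction could be repaired by using $1+\pref(\sig(X))$ instead, but the paper sidesteps the issue entirely by deriving the ``moreover'' clause from \sateqce itself: given any $\cS_2$-solution, extract a $\cS_1$-solution via the marker, then re-insert it through $\sig'(X)=1+\pref(\sig(X)d)$, which is nonempty and prefix-closed by construction. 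In short: the statement is true, but the mechanism you propose for the direction ``$\cS_2$ solvable $\Rightarrow\cS_1$ solvable'' (finite-set subtraction or length thresholds) provably fails, and the missing idea is the semantic extraction by the end-marker $d$.
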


\begin{proof}
Let $\sig: \OO \to 2_f^{B^*}$ be any solution of $\cS_1$. Then  
$$\sig'(X) = 1 + \pref(\sig(X){d})$$ defines a \solu of $\cS_2$ in nonempty prefix-closed sets. Thus, it is enough to show that if $\cS_2$ is solvable, then 
$\cS_1$ is solvable, too. 

To this end, let $\sig'$ be any solution of $\cS_2$.
Define 
$$ \sig(X) = \set{u \in B^*}{u{d} \in \sig'(X)}.$$

(Note that $\sig(X)$ might be empty.) 
Now, $\sig'(X_0) = \os{1,{d}}$ implies $\sig(X_0) = \os{1}$.
It remains to show that $a\sig'(X) \sse \os {a} +  b\sig'(Y) + c\sig'(Z)$
implies $a\sig(X) \sse b\sig(Y) + c\sig(Z)$. This is straightforward.  
Indeed, let $u\in \sig(X)$, hence $u{d} \in \sig'(X)$. Since $au{d}\neq a$  we must 
have $aud \in b\sig'(Y) + c\sig'(Z)$. By symmetry, we may assume 
$aud \in b\sig'(Y)$. Thus, $aud = bvd$ with $vd \in \sig'(Y)$. This implies 
$v \in \sig(Y)$. Therefore, $au \in b \sig(Y)$. Hence, the result.
\end{proof}

The system $\cS_2$ does not suffice for our purpose. We need 
a system where we can control that all solutions $\sig$ and all variables $X$ satisfy $\sig(X) \sse \os{\alp,\bet}^*$. The crucial observation is as follows: let 
$L_1 \lds L_n \sse B^*$ be finite subsets. Then their union is finite, and so is the factor-closure of their union
$$K = \set{v\in B^*}{\exists u, w\in B^*\,\exists 1 \leq i \leq n : uvw \in  L_i}.$$
Factor-closed languages are prefix and  suffix-closed; and for suffix-closed  languages we can control its alphabet by the following condition 
\begin{align}\label{eq:sufalp} 
K \sse \os 1 \cup \bigcup_{a\in B}aK.
\end{align}
More precisely, for every language $K \sse C^*$ where $B \sse C$ we have that 
$K$ satisfies (\ref{eq:sufalp}) \IFF both, $K$ is suffix-closed and $K\sse B^*$. 

\begin{proposition}\label{prop:sys3}
There is a polynomial time algorithm which produces on an input, which is a system $\cS$ of language equations over $B^*$, an output, which is a system of two language equations $\cS'$ satisfying the following conditions.
\begin{itemize}
\item $\cS$ and $\cS'$ are \sateq.
\item If $\cS'$ has any solution, then it has a \solu in nonempty prefix-closed subsets of $B^*$ and therefore a \solu as a language equation over 
the group $\FA$.
\item If $\cS'$ has  a \solu as system of language equations over 
the group $\FA$, then $\cS$ is solvable.
\item The system $\cS'$ can be written in the following syntactic form
\begin{align}\label{eq:syntcSP1}
L + \sum_{i\in I}u_i X_i
&=  K + \sum_{j\in J}v_j X_j,\\
1 + Z+ \sum_{a \in B}aZ  + \sum_{k \in I \cup J}X_k&=   1 + \sum_{a \in B}aZ\label{eq:syntcSP2}
\end{align}
where  $L,K \in 2_f^{B^*}$ and  $u_i,v_j$ denote nonempty words in $B^*$. Moreover, $Z \neq X_k$ for all $k \in I \cup J$. 
\end{itemize}
\end{proposition}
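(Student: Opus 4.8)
The plan is to erect $\cS'$ on top of the normal form already obtained and then to append a single \emph{alphabet-control} equation built around one fresh variable~$Z$. Starting from the system $\cS_2$ constructed above, for which \prref{lem:transfeq} already secures \sateqce with~$\cS$ and a \solu in nonempty prefix-closed sets whenever it is solvable, I would first replace every ``$\leq$'' by its two ``$=$'' equations and then fuse the whole finite system into a single equation by the prefix-code device of \prref{rem:bn7.6}: prepend pairwise distinct nonempty markers $p_k\in B^*$ of equal length to both sides of the $k$-th equation and add up all left-hand and all right-hand sides. As each $p_k$ is nonempty, every variable term inherits a nonempty coefficient, so the result has precisely the shape \eqref{eq:syntcSP1} with all $u_i,v_j$ nonempty; and since a prefix code preserves the solution set, this equation is \sateq to $\cS_2$ and shares its solutions.

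For the second equation I would take \eqref{eq:syntcSP2}. As its left-hand side visibly contains the entire right-hand side, the equation is equivalent to the containments $Z\sse\os{1}+\sum_{a\in B}aZ$ (which is exactly \eqref{eq:sufalp}) together with $X_k\sse\os{1}+\sum_{a\in B}aZ$ for all $k\in I\cup J$. Over $B^*$ the crucial observation preceding the statement then forces $Z$ to be suffix-closed with $Z\sse B^*$, whence $X_k\sse\os{1}+\sum_a aZ\sse B^*$. Conversely, given a solution $\sig$ of \eqref{eq:syntcSP1} over $B^*$ I would define the value of $Z$ to be the factor-closure of $\bigcup_k\sig(X_k)$: being factor-closed it is prefix- and suffix-closed, contains~$1$, satisfies $Z\sse\os{1}+\sum_a aZ$, and places every $X_k$ inside $\os{1}+\sum_a aZ$, so \eqref{eq:syntcSP2} holds. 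This delivers the \sateqce of $\cS$ and $\cS'$ and, routing through \prref{lem:transfeq} to first obtain a prefix-closed solution of $\cS_2$ and then adjoining the factor-closure as the value of~$Z$, a solution of $\cS'$ in nonempty prefix-closed subsets of $B^*$; such a solution consists of reduced words over $B$ and, no cancellation being possible, is simultaneously a solution over $\FA$.

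The delicate point is the last bullet: recovering solvability of $\cS$ from a solution $\sig$ of $\cS'$ read over $\FA$. The risk is that over $\FA$ the product $aZ$ means $\set{\wh{az}}{z\in Z}$, so a priori cancellation could smuggle barred letters past the control equation. I would rule this out directly. If $z\in Z\sm\os{1}$ then $z=\wh{az'}$ with $a\in B$ and $z'\in Z$, and either $z'$ does not begin with $\ov a$, so that $z=az'$ begins with the unbarred letter~$a$ and its tail $z'\in Z$ is shorter, or $z'$ begins with $\ov a$, so that $z'=\ov az\in Z$ has length $\abs z+1$. The second alternative is self-propagating: an element of $Z$ beginning with a barred letter must again decompose in the second way, yielding elements of $Z$ of strictly increasing length, which is impossible as $Z$ is finite. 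Hence each $z\in Z\sm\os{1}$ begins with an unbarred letter and has its tail in $Z$; by induction $Z\sse B^*$, so $aZ\sse B^*$ carries no cancellation and every $\sig(X_k)\sse B^*$ as well. With all variables of \eqref{eq:syntcSP1} and all coefficients in $B^*$, the group multiplication of $\FA$ collapses to concatenation, so $\sig$ restricted to these variables solves \eqref{eq:syntcSP1} over $B^*$, hence solves $\cS_2$ and witnesses solvability of $\cS$.

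Every transformation above (normal form, fusion, adjunction of the control equation) runs in polynomial time, with at most a quadratic blow-up where long coefficients are split, so the construction is polynomial. I expect the genuine obstacle to be the $\FA$-direction of the previous paragraph: the crucial observation is a free-monoid statement, and its effect must be re-established under the group multiplication of $\FA$ by the length/first-letter argument rather than imported verbatim.
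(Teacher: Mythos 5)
Your proposal is correct and follows essentially the same route as the paper's proof: normal form $\cS_2$ via \prref{lem:transfeq}, fusion into the single equation \eqref{eq:syntcSP1} by the prefix-code device of \prref{rem:bn7.6} (with nonempty markers guaranteeing $u_i,v_j\neq 1$), the control equation \eqref{eq:syntcSP2} with the factor-closure of the union of the $\sig(X_k)$ as witness for $Z$, and a finiteness-of-$Z$ length argument to rule out cancellation in the $\FA$-direction. Your only deviation is cosmetic: where the paper chooses a maximal-length element $\ov b u$ in some $\sig'(X)$ and derives a strictly longer one in $\sig'(Z)$, you run the same ascent inside $Z$ alone and then conclude $Z\sse B^*$ (hence $X_k\sse B^*$) by induction on length, a slightly stronger but equivalent way to reach the same conclusion.
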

 
\begin{proof}
We may start with the system $\cS_2$ which satisfies \prref{lem:transfeq}.
Since it is \sateq to $\cS_1$, it is also \sateq to $\cS$. The system $\cS_2$ contains equations $X_0 \leq  1+d$ and $1+d\leq X_0$ for some $d \in B$ and all other equations have the 
form $a_{1k}X_{1k}  \leq a_{2k}X_{2k} + a_{3k}X_{3k}$ where $a_{ik}\in \os 1 \cup B$ and $X_{ik}$ are variables. 
By the procedure described in \prref{rem:bn7.6} we transform  $\cS_2$
into a single equation $\cE'$ which has ``almost'' the syntactic form as required in 
\prref{eq:syntcSP1}, because we have
\begin{equation*}
L + \sum_{i\in I} u_i X_i
\leq   K +  \sum_{j\in J} v_j X_j,
\end{equation*}
If $\mathrm{RHS}$ denotes the right-hand side, then we can replace
$$L + \sum_{i\in I} u_i X_i
\leq  \mathrm{RHS}$$
by 
$$L + \sum_{i\in I} u_i X_i + \mathrm{RHS}
= \mathrm{RHS}$$
Hence, a syntactic form as it is required by (\ref{eq:syntcSP1}). 
Recall that these transformations do not change the set of \solu{s}. 
Without restriction, $u_i\neq 1$ and $v_j\neq 1$ for all $i,j$. Moreover, we may assume that there is a variable $Z\in \OO$ which does not appear in 
(\ref{eq:syntcSP1}). Adding \prref{eq:syntcSP2} defines $\cS'$. The system 
$\cS'$ contains two equations.

If $\cS$ is solvable, then (\ref{eq:syntcSP1}) has a solution $\sig$ 
in nonempty prefix-closed subsets of $B^*$. As $Z$ does not appear 
we may assume $\sig(Z) = \os 1$. Redefining 
$$\sig(Z) = \set{v \in B^*}{\exists u \in B^*\, \exists X \in \OO: uv \in \sig(X)}$$ yields a solution 
in nonempty prefix-closed subsets of $B^*$ of the system $\cS'$. 
Hence, a \solu as a language equation over 
the group $\FA$. 

Finally, let $\sig':\OO \to 2_f^{\FA}$ a \solu of $\cS'$ as a language equation over the group $\FA$.
We claim that $\sig'$ is also a \solu in the free monoid 
$A^*$. If so, then $\cS_2$ has a solution in $A^*$, and this implies that 
$\cS$ is solvable. 

By contradiction, assume that $\sig'$ does not solve $\cS'$ over $A^*$. 
Then there are some $b \in B$, $u\in A^*$,  and $X\in \OO$ with $\ov b u \in \sig'(X) \sse \FA$.
We may choose $b$ and $X$ such that $\abs u $ is maximal. 
\prref{eq:syntcSP2} implies $\ov b u = \pi(av)$ for some $a \in B$ and some reduced word $v \in \sig(Z)$. Since $\ov b \neq a$, this implies $v= \ov a w$ and  $\pi(av) = w$. Hence 
$\ov a \ov b u \in Z$, which contradicts that $u$ was of maximal length.
\end{proof}

\subsubsection{Finishing the proof of Theorem~\ref{thm:marie}}\label{sec:mariemainliftsin}

Due to \prref{thm:bn01} and \prref{prop:sys3} we know that the  problem to decide systems $\cS'$ with two language equations in the form of \prref{prop:sys3} 
is $\DEXPTIME$-complete. Thus, all we need to finish the proof of \prref{thm:marie} is the following lemma.

\begin{lemma}\label{lem:marieade}
There is polynomial time algorithm which produces on an input equation  $\cS'$ 
as in \prref{prop:sys3} a system $\cS''$ of two equations $U_1=V_1$ and $U_1 = V_2$ over $\FIM{A}$ in idempotent variables 
such that $\cS'$ is solvable as a language equation \IFF $\cS''$ is solvable over $\FIM{A}$.
\end{lemma}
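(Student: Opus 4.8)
The plan is to read \prref{lem:from_idempotents_to_subsets} backwards. The starting observation is that when every variable is idempotent, a word $W$ over $A\cup\OO$ built only from \emph{conjugated} factors evaluates, under any substitution $\sig$ into idempotents $(\sig(X),1)$, to an idempotent $(D_W,1)$, and two idempotents coincide in $\FIM{A}$ \IFF their domains coincide. The two gadgets I need are $c\,X\,\ov c$, which contributes $\pref(c)\cup c\,\sig(X)$ to the domain, and a pure constant $w\,\ov w$, which contributes $\pref(w)$; hence a product of such factors realizes in its domain a prefix-closed language expression $\pref(C)\cup\bigcup_\ell c_\ell\,\sig(X_\ell)$ with prescribed reduced coefficients $c_\ell$ and constant stems $C$. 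Since a domain is always prefix-closed, the equation $W=W'$ over $\FIM{A}$ expresses precisely the \emph{prefix-closure} of the intended language equation.

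First I would combine the two language equations of \prref{prop:sys3} into two idempotent equations with a common left-hand side, exploiting the disjoint prefixes $\alp$ and $\bet$. Writing $P_1,Q_1$ for the sides of (\ref{eq:syntcSP1}) and $P_2,Q_2$ for the sides of (\ref{eq:syntcSP2}), I build $U_1$ realizing $\alp P_1+\bet P_2$, $V_1$ realizing $\alp Q_1+\bet P_2$, and $V_2$ realizing $\alp P_1+\bet Q_2$, each as an explicit product of the gadgets above (prefixing a coefficient by $\alp$ or $\bet$ introduces no cancellation because all data lie in $B^*$). As $\alp$ and $\bet$ open two disjoint subtrees of the Cayley graph of $\FA$, the domain equation $U_1=V_1$ splits into an $\alp$-part and a $\bet$-part; the $\bet$-parts are identical and the equation reduces to the prefix-closed form of (\ref{eq:syntcSP1}), while $U_1=V_2$ reduces symmetrically to the prefix-closed form of (\ref{eq:syntcSP2}). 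Thus the two-equation system $\cS''=\os{U_1=V_1,\;U_1=V_2}$ is solvable over $\FIM{A}$ \IFF the prefix-closures of (\ref{eq:syntcSP1}) and (\ref{eq:syntcSP2}) are simultaneously solvable in nonempty prefix-closed subsets of $\FA$; since $U_1,V_1,V_2$ use at most quadratically many gadgets, this is computable in polynomial time.

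It remains to match this with solvability of $\cS'$, and here I would lean on \prref{prop:sys3}. The forward implication is immediate: if $\cS'$ is solvable then \prref{prop:sys3} supplies a solution $\sig$ in nonempty prefix-closed subsets of $B^*$, and reading each $\sig(X)$ as the idempotent $(\sig(X),1)$ solves $\cS''$, because a literal solution over $B^*$ a fortiori satisfies the prefix-closed forms and the split of the two branches is then clean.

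The main obstacle is the converse, that a solution of $\cS''$ forces $\cS'$ to be solvable, since an $\FIM{A}$-solution a priori only delivers prefix-closed subsets of $\FA$ satisfying the prefix-closures of (\ref{eq:syntcSP1})--(\ref{eq:syntcSP2}), which are weaker than the equations themselves, and cancellation with $\ov\alp,\ov\bet$ could even blur the $\alp/\bet$ split. Two features of $\cS'$ rescue the argument. First, (\ref{eq:syntcSP2}) still controls the alphabet: working directly with the full domain equation $D_{U_1}=D_{V_2}$ one recovers the suffix-closure condition behind (\ref{eq:sufalp}) and confines $Z$ and every $X_k$ to $B^*$, by the maximal-length contradiction already used at the end of the proof of \prref{prop:sys3} (a barred first letter in a longest witness would have to be produced by $\sum_{a\in B}a\,Z$, forcing a strictly longer element); once values lie in $B^*$ the branch split becomes clean. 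Second, by construction (\ref{eq:syntcSP1}) is an inclusion in disguise, its right-hand side occurring inside its left-hand side, and the gadget realization makes the realized right-hand side $\pref(Q_1)$ genuinely prefix-closed; for an inclusion $P\sse Q$ with $Q$ prefix-closed, $\pref(P)\sse\pref(Q)$ already yields $P\sse Q$, so the prefix-closed form of (\ref{eq:syntcSP1}) reconstructs a solution of a system sat-equivalent to the inequality system underlying $\cS'$. By \prref{prop:sys3} such a solution makes $\cS$, hence $\cS'$, solvable, which closes the equivalence.
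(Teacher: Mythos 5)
Your core encoding is exactly the paper's: the gadgets $w\,\ov w$ and $c\,X\,\ov c$ with domains $\pref(w)$ and $\pref(c)\cup c\,\sig(X)$ are precisely the paper's construction $W(T)=\prod_{u\in L}u\ov u\cdot\prod_i u_iX_i\ov{u_i}$, and like the paper you route the forward direction through the second bullet of \prref{prop:sys3} (a solution in nonempty prefix-closed subsets of $B^*$ gives an $\FIM A$-solution) and the converse through the third bullet. Where you diverge is in taking the statement's common left-hand side literally: the paper's proof does \emph{not} produce a common $U_1$; it simply outputs the two untagged equations $W(\mathrm{LHS}_1)=W(\mathrm{RHS}_1)$ and $W(\mathrm{LHS}_2)=W(\mathrm{RHS}_2)$, and downstream (the hardness claims in \prref{thm:marie} and \prref{thm:mainlift}) only the \emph{number} of equations is used. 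Your $\alp/\bet$-tagging step is therefore self-imposed extra work, and it is exactly where your argument has a genuine gap.

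The gap is in the converse direction of the merged system. First, the ``disjoint subtrees'' split of $U_1=V_1$ and $U_1=V_2$ is circular: an $\alp$-tagged gadget $\alp u_i X_i\,\ov{u_i}\,\ov\alp$ contributes domain elements $\pi(\alp u_i w)$ with $w\in\sig_1(X_i)\sse\FA$, and if $w$ begins with enough barred letters these elements can start with $\bet$ (and vice versa); so the split is only clean \emph{after} the values are confined to $B^*$, which is the very thing the converse must establish. Second, your patch --- running the maximal-length argument of \prref{prop:sys3} on $D_{U_1}=D_{V_2}$ --- fails in the case where the leading barred letter of a longest witness equals the tag letter: if $\ov\bet u\in\sig_1(X_k)$, the $\bet$-tagged occurrence of $X_k$ contributes only $\pi(\bet\,\ov\bet u)=u$, whose first letter need not be barred, so no equation of the form $\ov b u=\pi(av)$ is generated and no strictly longer barred-initial element of $\sig_1(Z)$ is forced. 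The paper's untagged construction is immune to this precisely because in $W(\mathrm{LHS}_2)=W(\mathrm{RHS}_2)$ the variables $Z$ and $X_k$ of \prref{eq:syntcSP2} occur \emph{bare} (as $Z$, $X_k$, $aZ\ov a$), so both sides are already prefix-closed under any admissible assignment, the domain equation is the equation \prref{eq:syntcSP2} itself, and every element of every $\sig_1(X_k)$ literally lies in $1+\sum_a a\,\sig_1(Z)$, making the maximal-length contradiction sound verbatim. A secondary looseness: what a domain equation gives you for \prref{eq:syntcSP1} is only ``extra terms $\sse\pref(\text{RHS-value})$''; your appeal to ``$P\sse Q$ with $Q$ prefix-closed'' conflates the realized domain $\pref(Q_1)$ with the actual value $Q_1$, and the sat-equivalence of the prefix-closure--weakened system that you then invoke is asserted, not proved. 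The straightforward repair is to drop the merge and emit the two untagged equations as the paper does.
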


\begin{proof}
Consider  the system $\cS'$ in \prref{prop:sys3} and let 
$\text{LHS}_i$ resp.{} $\text{RHS}_i$ be the left- resp.{} right-hand sides of the corresponding equations, $i= 1,2$. 
Each of these terms has the form $T= L + \sum_{i\in I} u_i X_i$ which is 
defines a word $W(T)$ by  
$$W(T) = \prod_{u \in L} u \ov u \cdot \prod_{i \in I}u_i X_i \ov{u_i}.$$
The ordering in the products can be chosen arbitrarily.  We define $\cS''$ by a system of two equations:
\begin{align*}
W(\text{LHS}_1) & = W(\text{RHS}_1), \\ 
W(\text{LHS}_2) & = W(\text{RHS}_2).
\end{align*}
If $\cS'$ is solvable in nonempty prefix-closed subsets of $B^*$, then 
$\cS''$ is solvable in $\FIM{A}$. Conversely, if $\cS''$ is solvable in $\FIM{A}$, then $\cS'$ has  a \solu as system of language equations over 
the group $\FA$. 
\end{proof}

\prref{prop:sys3} and \prref{lem:marieade} conclude the proof of \prref{thm:marie}. 
%
\section{One-variable equations}\label{sec:onevar}
Throughout this section we assume that the involution 
on $A$ is without fixed points, \ie $\FGRI A$ is equal to the free group $\FG (A_+) $ in the standard terminology. It is open whether we can remove this restriction. 

The following notation is defined for any alphabet $\Sig$  and any nonempty word $p \in \Sig^+$. 
For $u \in \Sig^*$  we let ${\abs u}_p$ be the number of occurrences of $p$ as a factor in $u$. Formally:
\begin{equation*}\label{eq:fred}
{\abs u}_p = \abs{\set{u'}{u'p \leq u}}.
\end{equation*}
The following equation 
is trivial since $p$ may occur across the border 
between  $u$ and $v$ at most  $\abs p-1$ times. 
\begin{equation}\label{eq:freddy}
0 \leq {\abs {uv}}_p - {\abs u}_p - {\abs v}_p \leq  \abs p -1.
\end{equation}
Next, assuming that $\Sig$ is equipped with an involution, we define a ``difference'' function $\del_p: \Sig^* \to \Z$ by 
\begin{equation*}\label{eq:otto}
\dip u p  = {\abs u}_p - {\abs u}_{\ov p}.
\end{equation*}
Since $\dip u p =  \dip {\ov u} {\ov p}$ we have $\dip u p = - \dip {\ov u} p$, and the mapping $\del_p$ respects the involution. 

By definition, we have
\[\delta_p(uv) -\delta_p(u)-\delta_p(v)  = ( |uv|_p -|u|_p-|v|_p) - ( |uv|_{\overline{p}} -|u|_{\overline{p}} - |v|_{\overline{p}})\]
Hence, we may use \prref{eq:freddy} to conclude:
\begin{equation}\label{eq:frod}
\abs {\dip {uv} p  - \dip u p  - \dip v p} \leq  \abs p -1.
\end{equation}

As we identify $\FGRI \Sig$ with the subset of reduced words in $\Sig^*$, the mapping $\del_p$ is defined from $\FGRI \Sig$ to $\Z$, too. 
The next lemma shows that its deviation from being a \homo can be upper  bounded. 
The next lemma will be applied to a primitive word $p$, only. Let us remind that a  word is defined to be primitive if it  cannot be written in the form $v^i$ for some word $v$ with $i>1$ and it is not empty .
Every nonempty word $u$ has a {\em primitive root}: it is the uniquely defined  primitive word $p$ such that $u \in p^+$. 
\begin{lemma}
\label{lem:bapra}
Let $u_1, \ldots,u_n, p $ be reduced words with $p \neq 1$. 
Let $w$ be the uniquely defined reduced word such that 
$w$ is equal to $u_1 \cdots u_n$ in the group $\FGRI \Sig$. 
Then we have: 
\beq
\label{eq:bapra2}
\abs{\dip {w} p 
- \dip {u_1}{p}  - \cdots  - \dip {u_n} p} \leq
3(\abs{p} -1)(n-1).
\eeq
\end{lemma}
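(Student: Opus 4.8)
The plan is to build up the product one factor at a time and track how $\dip w p$ changes under each single multiplication. Set $w_1 = u_1$ and, for $1 \le k \le n-1$, let $w_{k+1}$ be the reduced word representing $w_k u_{k+1}$ in $\FGRI \Sig$, so that $w_n = w$. The heart of the argument is a single-junction estimate, namely that for each $k$ one has
$$\abs{\dip{w_{k+1}}{p} - \dip{w_k}{p} - \dip{u_{k+1}}{p}} \le 3(\abs p - 1).$$
Summing this over $k = 1, \ldots, n-1$ then yields \prref{eq:bapra2} directly: the left-hand sides telescope, the terms $\dip{w_{k+1}}{p} - \dip{w_k}{p}$ collapse to $\dip{w}{p} - \dip{u_1}{p}$, and the remaining terms sum to $-\sum_{i=2}^n \dip{u_i}{p}$, so the total is $\dip{w}{p} - \sum_{i=1}^n \dip{u_i}{p}$, bounded in absolute value by $3(\abs p - 1)(n-1)$.

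To establish the single-junction estimate I would exploit that $w_k$ and $u_{k+1}$ are both reduced, so their product in the free group is obtained by one clean cancellation at the boundary. Let $\gam_k$ be the maximal suffix of $w_k$ whose inverse $\ov{\gam_k}$ is a prefix of $u_{k+1}$, and write $w_k = \alp_k \gam_k$ and $u_{k+1} = \ov{\gam_k}\bet_k$ as genuine factorizations of reduced words. By maximality the word $\alp_k \bet_k$ is already reduced, so it literally equals $w_{k+1}$ (as a word, not merely in the group). Applying \prref{eq:frod} to each of the three honest concatenations $\alp_k \bet_k$, $\alp_k \gam_k$, and $\ov{\gam_k}\bet_k$ gives
$$\dip{w_{k+1}}{p} = \dip{\alp_k}{p} + \dip{\bet_k}{p} + e_1, \quad \dip{w_k}{p} = \dip{\alp_k}{p} + \dip{\gam_k}{p} + e_2,$$
$$\dip{u_{k+1}}{p} = \dip{\ov{\gam_k}}{p} + \dip{\bet_k}{p} + e_3,$$
with each $e_i$ bounded by $\abs p - 1$ in absolute value. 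The key point is that the deleted block contributes nothing: since $\del_p$ reverses sign under the involution, $\dip{\ov{\gam_k}}{p} = -\dip{\gam_k}{p}$, so these two terms cancel in $\dip{w_k}{p} + \dip{u_{k+1}}{p}$, leaving $\dip{w_{k+1}}{p} - \dip{w_k}{p} - \dip{u_{k+1}}{p} = e_1 - e_2 - e_3$, whence the bound $3(\abs p - 1)$.

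The main thing to get right is precisely this decomposition at the junction: verifying that the cancellation between the reduced words $w_k$ and $u_{k+1}$ is a single clean block $\gam_k\ov{\gam_k}$, that $\alp_k\bet_k$ is reduced (so no second-order cancellation occurs and $w_{k+1} = \alp_k\bet_k$ holds on the nose), and that \prref{eq:frod} is being applied only to honest word concatenations in $\Sig^*$. Once these are in place the rest is a routine telescoping sum, and the factor $3$ in the statement is exactly accounted for by the three invocations of \prref{eq:frod} at each of the $n-1$ junctions.
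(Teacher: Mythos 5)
Your proof is correct and follows essentially the same route as the paper: the paper peels off the last factor by induction, which is the same telescoping argument as your left-to-right accumulation, and its single-junction estimate is exactly yours, writing $u = u'r$, $v = \ov r v'$ with $w = u'v'$, applying \prref{eq:frod} three times, and using $\dip{\ov r}{p} = -\dip{r}{p}$ to cancel the deleted block. No gaps; your care in checking that $\alp_k\bet_k$ is already reduced is the same (implicit) point the paper relies on.
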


\begin{proof}
 Clearly, \prref{eq:bapra2} holds for $n = 1$.
Hence, let $n \geq 2$. Let $u$ be the reduced word such that 
$u_1 \cdots u_{n-1}$ reduces to $u$.
By induction, we have $\abs{\dip {u} p 
- \dip {u_1}{p}  - \cdots  - \dip {u_{n-1}} p} \leq
3(\abs{p} -1)(n-2)$.
Let $v = u_n$. By triangle inequality it is enough to show 
\beq
\label{eq:bapra3}
\abs{\dip {w} p 
- \dip {u}{p} -  \dip {v} p} \leq
3(\abs{p} -1).
\eeq
To see this write $u = u'r$ and $v = \ov r v'$ such that 
$w = u'v'$. 
\begin{align*}
\dip {w} p - \dip {u}{p} -  \dip {v} p &= \dip {w} p - \dip {u'}{p} -  \dip {v'} p\\
&+ \dip {u'}{p} + \dip {r}{p} - \dip {u}{p} \\
&+ \dip {\ov r}{p}+\dip {v'}{p}  - \dip {v}{p}
\end{align*}
The result follows by \prref{eq:frod} and triangle inequality. 
\end{proof}

We will apply \prref{lem:bapra} in the following equivalent form.
\begin{multline}
\label{eq:baprpr}
 \dip {u_1}{p}  + \cdots  + \dip {u_n} p - 3(\abs{p} -1)(n-1)   \leq  \dip {w} p   \\
   \leq \dip {u_1}{p}  +  \cdots  +  \dip {u_n}p +  3(\abs{p} -1)(n-1).
\end{multline}
The following lemma is easy to prove. It is however here where we use  $a \neq \ov a$ for all $a \in A$. Let us recall that a word $q$ is  \emph{cyclically reduced} if  $qq$ is reduced. In other words if $a$ is the first letter of $q$, the last letter of $q$ is different from $\overline{a}$.

\begin{lemma}
\label{lem:pricr}
Let $n \in \Z$ and $q \in {\FA}$ be a primitive and cyclically reduced word. Then we have $\del_q(q^n)
= n.$
\end{lemma}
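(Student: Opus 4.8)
The plan is to prove $\del_q(q^n) = n$ by analyzing how occurrences of $q$ and $\ov q$ sit inside the reduced word $q^n$. Since $q$ is primitive and cyclically reduced, $q^n$ is itself a reduced word (concatenation of cyclically reduced copies does not create cancellation), so I may count factor occurrences directly in $q^n \in \Sig^*$ without worrying about reduction. The key is to compute ${\abs{q^n}}_q$ and ${\abs{q^n}}_{\ov q}$ separately and show their difference equals $n$.

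First I would handle the case $n \geq 1$. The idea is that the occurrences of $q$ in $q^n$ are tightly controlled by primitivity. By the Fine and Wilf / synchronization property of primitive words, a primitive word $q$ overlaps with itself only trivially: if $q$ occurs at positions $i$ and $j$ in $q^n$ with $i \neq j$, then $|i - j|$ is a multiple of $|q|$. Consequently the only occurrences of $q$ as a factor of $q^n$ start at positions $0, |q|, 2|q|, \ldots, (n-1)|q|$, giving exactly ${\abs{q^n}}_q = n$. For the term ${\abs{q^n}}_{\ov q}$, I would argue that $\ov q$ cannot occur as a factor of $q^n$ at all, so ${\abs{q^n}}_{\ov q} = 0$, yielding $\del_q(q^n) = n - 0 = n$. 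This last point is exactly where the hypothesis $a \neq \ov a$ for all $a \in A$ enters, as the lemma's preamble flags: because the involution has no fixed points, $\ov q$ and $q$ cannot be cyclic rotations of one another inside a power of $q$, so $\ov q$ never appears as a factor of the positive power $q^n$.

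For negative $n$, I would reduce to the positive case using the involution. Writing $n = -m$ with $m > 0$, we have $q^n = \ov{(\ov q)^{m}}$ in $\FA$; equivalently $q^{-m}$ is represented by the reduced word $\ov q\,^{m}$. Then, using that $\del_q$ respects the involution together with the identity $\dip u p = -\dip {\ov u} p$ established just before the lemma, I get $\del_q(q^{-m}) = -\del_{q}(\ov{q^{-m}}) = -\del_q((\ov q)^m)$, and by the positive case applied to the primitive cyclically reduced word $\ov q$ this equals $-(-m) = n$. The case $n = 0$ is immediate since $q^0 = 1$ and $\del_q(1) = 0$.

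The main obstacle I anticipate is justifying rigorously that $\ov q$ does not occur as a factor of $q^n$, and that the occurrences of $q$ are exactly the $n$ aligned ones. Both facts rest on the combinatorics of primitive words: the non-overlapping property gives the alignment of $q$-occurrences, while the absence of $\ov q$-occurrences requires ruling out that some cyclic conjugate of $q$ equals a cyclic conjugate of $\ov q$, which is precisely prevented by $a \neq \ov a$ (were there a fixed point, a palindrome-like $q$ could let $\ov q$ appear). I would state these as the combinatorial core and keep the surrounding bookkeeping brief, since the arithmetic $n - 0 = n$ is trivial once the counting is settled.
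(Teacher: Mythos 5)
Your proof is correct in substance and takes essentially the same route as the paper: for $n>0$ you obtain ${\abs{q^n}}_q = n$ from the trivial-self-overlap property of primitive words (the paper phrases this as $q = q_1q_2 = q_2q_1$ with $q_1,q_2 \neq 1$ contradicting primitivity), and you exclude occurrences of $\ov q$ by observing that such an occurrence would make $\ov q$ a cyclic rotation $q_2q_1$ of $q$, whence $q_2q_1 = \ov{q_2}\,\ov{q_1}$ forces $q_1 = \ov{q_1}$ and $q_2 = \ov{q_2}$, which is impossible for nonempty reduced words when $a \neq \ov a$ --- exactly the paper's argument, including the correct identification of where fixed-point-freeness enters.

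One slip worth flagging in your negative-$n$ bookkeeping: $\ov{q^{-m}} = q^{m}$, not $(\ov q)^{m}$, so the second equality in your chain $\del_q(q^{-m}) = -\del_q\bigl(\ov{q^{-m}}\bigr) = -\del_q\bigl((\ov q)^{m}\bigr)$ is false, and evaluating the last expression as you indicate yields $m = -n$ rather than $n$. The repair is immediate and needs no involution identity at all: $(\ov q)^{m}$ is already the reduced representative of $q^{-m}$, so
\begin{equation*}
\del_q(q^{-m}) = {\abs{(\ov q)^{m}}}_q - {\abs{(\ov q)^{m}}}_{\ov q} = 0 - m = n,
\end{equation*}
where both counts follow from your positive case applied to the primitive cyclically reduced word $\ov q$. (The paper sidesteps this entirely with ``without loss of generality $n>0$''.)
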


\proof
We may assume without loss of generality that $n > 0$. Clearly,
$|q^n|_q \geq n$. Suppose that $|q^n|_q > n$. Then $q$ is a proper
factor of $qq$, hence we may write $q =
q_1q_2 = q_2q_1$ in reduced products with $q_1,q_2 \neq 1$. It is well
known (see e.g. \cite{lot83}) that this contradicts the primitivity of
$q$. Thus, $|q^n|_q = n$.

Suppose now that $\ov q$ is a proper
factor of $qq$. Then we may write $q =
q_1q_2$ as a reduced product with $\ov q= q_2q_1$ since $q$ is cyclically reduced. Moreover, since $\ov q  =
\ov q_2\ov q_1$ we get $q_2 = \ov q_2$ and $q_1 = \ov q_1$. Hence
$q_1 = q_2 = 1$ because $q_1$, $q_2$ are reduced and $a \neq \ov a$ for all $a \in A$.
Thus, $|q^n|_{\ov q} = 0$ and so $\del_q(q^n)
= |q^n|_{q} - |q^n|_{\ov q}  = |q^n|_{q}= n$.
\qed

An (untyped) equation $(U,V)$ is called a 
\emph{one-variable equation}, 
 if we can write $UV \in (A \cup  \os{X , \ov X})^*$. 
 More generally,  
 we also consider systems of typed equations 
with at most one  reduced variable $x$ (and $\ov x$), \ie
every equation $(U,V)$ in the system satisfies 
$UV \in (A \cup \OO \cup  \os{x, \ov x})^*$. 
  Let us fix some more notation, we let
$\Sig= A \cup \OO \cup \Gam$ with $\Gam= \os{x , \ov x}$.
In particular, we have $\ov X = X$ for all $X \in \OO$ and 
$\alpha \neq \ov \alpha$ for all $\alpha \in A \cup \Gam$.

\begin{definition}\label{def:unbw}
Let $u,v \in \Gam^*$.  We say that $(u,v)$ is
\emph{unbalanced} if $u \neq v$ in the free inverse monoid 
$\FIM{\Gam}$. 

Otherwise we say that $(u,v)$ is \emph{balanced}. 
\end{definition}

\begin{remark}\label{rem:remsein}
Using the well-known structure of $\FIM{\Gamma}$,
 a pair $(u,v)$ 
as in \prref{def:unbw} is balanced \IFF the following three conditions
are satisfied.\\ 
\begin{itemize}
\item $\dip u x  = \dip v x$.
\item 
$\max\{ \delta_x(u') \mid u' \leq u\} = \max\{ \delta_x(v') \mid v' \leq v\}$.
\item 
$\min\{ \delta_x(u') \mid u' \leq u\} = \min\{ \delta_x(v') \mid v'
\leq v\}$.
\end{itemize}
\end{remark}

We extend the notion defined in \prref{def:unbw} to an untyped one-variable equation. In the following we let 
$\pi_{A,\Gam}$ be the morphism from $(A \cup \OO \cup \Gam)^*$ 
to $\FGRI {A \cup \Gam}$ which is induced by cancelling the symbols 
in $\OO$. 
\begin{definition}\label{def:numberedsunny}
Let  $(U,V)$ be an untyped one-variable equation
with $\XX = \os{X , \ov X}$.
We say that $(U,V)$ is  \emph{unbalanced}
if it fulfills both conditions:\\
1- $(u,v)$ is unbalanced as a word over $\Gam$ where $u$ (resp. $v$) is obtained
from $U$ (resp. $V$) by replacing $X$ by $x$ (and $\ov X$ by $\ov x$) and erasing
all other symbols. \\
2- $\pi_{A,\Gam}(U) \neq \pi_{A,\Gam} (V)$ in the free group $\FGRI{A \cup \Gam}$. 
\end{definition}
The following definition is a bit more technical, but it will lead to better  results. 
\begin{definition}\label{def:sun}
Let $U,V$ be words over $A\cup \OO \cup \Gam$.
 We say that $(U,V)$ is  \emph{strongly unbalanced } 
if $\pi_{A,\Gam}(U) \neq \pi_{A,\Gam} (V)$ in the free group $\FGRI{A
  \cup \Gam}$ and at least one of the following conditions is satisfied.
\begin{itemize}
\item[(SU1)] $\dip U x  \neq  \dip V x$.
\item[(SU2)] For all ${z} \in \OO \cup \os 1$ and all prefixes 
$V'{z}$ of $V$ there exists some prefix $U'{z}$ of $U$
such that $\dip {U'} { x}  > \dip {V'} {x}$.
\item[(SU3)] For all ${z} \in \OO \cup \os 1$ and all prefixes 
$V'{z}$ of $V$ there exists some prefix $U'{z}$ of $U$
such that $\dip {U'} {\ov x}  > \dip {V'} {\ov x}$.
\end{itemize}
\end{definition}

The following result improves the complexity in the corresponding statement of
\cite{DMS07ijac}. (Note that the condition $\pi_{A,\Gam}(U) \neq \pi_{A,\Gam} (V)$ 
was missing in \cite{DMS07ijac}, but the proof is not valid without this additional requirement.) 
\begin{theorem}\label{thm:lea}
The following problem can be decided in \DEXPTIME.

Input: A system $\cS$ of one-variable equations over $\XX = \os{X , \ov X}$ where at least one equation  $(U,V)$ is unbalanced according to  
\prref{def:numberedsunny}.

Question: Does $\cS$ have a solution in $\FIM A$?
\end{theorem}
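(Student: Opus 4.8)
The plan is to reduce \prref{thm:lea} to the already-established \prref{thm:marie} by showing that an unbalanced one-variable equation forces the single variable $X$ to take values in a \emph{bounded} range, after which $X$ can be eliminated in favour of finitely many idempotent variables. Concretely, let $(U,V)$ be the unbalanced equation guaranteed by \prref{def:numberedsunny} and write $u,v \in \Gam^*$ for the associated words over $\Gam = \os{x,\ov x}$. A solution assigns to $X$ some element of $\FIM A$, which we may write as $\sig(X) = (P,g)$ with $g \in \FGRI A$ reduced. The first step is to bound the length $\abs g$: I would apply \prref{lem:bapra}, in the equivalent form \prref{eq:baprpr}, to the reduced word $w$ equal to $\sig(U)$ (and to $\sig(V)$) in $\FGRI{A\cup\Gam}$, using a suitable primitive cyclically reduced word $p$ built from $g$. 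The point of condition~1 of \prref{def:numberedsunny}, via \prref{rem:remsein}, is that $u$ and $v$ differ in one of the three discrete invariants ($\del_x$-value, maximum, or minimum over prefixes); the point of \prref{lem:pricr} is that $\del_q(q^n)=n$, so these invariants grow linearly in $\abs g$ whenever $g$ is a nontrivial power along the variable. Combining the linear growth from \prref{lem:pricr} with the bounded deviation from \prref{lem:bapra}, the equation $\sig(U)=\sig(V)$ in $\FGRI{A\cup\Gam}$ becomes impossible once $\abs g$ exceeds a bound that is polynomial in $\Abs \cS$.

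With $\abs g$ bounded, the second step is to replace the single reduced/inverse variable $X$ (and $\ov X$) by a \emph{guess} of its group image $g$ together with idempotent data. Since there are only singly exponentially many reduced words $g$ of bounded length, I would nondeterministically guess $g= \pi(\sig(X))$ for each admissible value and, for each guess, invoke \prref{lem:taueqtoreq}: substituting $X \mapsto Z_X x_X$ and fixing the reduced variable $x_X$ to the concrete value $(\pref(g),g)$ turns the system into one over idempotent variables only, exactly the input format of \prref{thm:marie}. Condition~2 of \prref{def:numberedsunny}, namely $\pi_{A,\Gam}(U)\neq\pi_{A,\Gam}(V)$, ensures that the underlying group equation genuinely constrains $g$ so that the set of viable guesses is what the length bound describes, and is precisely the hypothesis whose omission is flagged as the gap in \cite{DMS07ijac}. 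Each resulting idempotent-variable system is then solvable \IFF the original system has a solution with that particular group image for $X$, and $\cS$ is solvable \IFF at least one guess yields a solvable system.

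For the complexity bookkeeping, the number of guesses for $g$ is at most singly exponential in $\Abs \cS$ (bounded length over a fixed alphabet), and for each guess the reduction of \prref{lem:taueqtoreq} is polynomial, producing a system of equations in idempotent variables whose size is polynomial in the original plus the encoding of $g$. Since $\DEXPTIME$ is closed under a singly exponential disjunction of $\DEXPTIME$ instances (one can enumerate all guesses and run the \prref{thm:marie} decision procedure on each, the total time remaining singly exponential as the square and product of singly exponential functions is singly exponential), the whole procedure runs in $\DEXPTIME$. I expect the main obstacle to be the first step: making the length bound on $\abs g$ precise and uniform over the \emph{system} rather than a single equation. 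The issue is that the unbalanced equation controls $X$ only through the invariants of $(u,v)$, and one must verify that \prref{lem:bapra} applied across all occurrences of $x$ and $\ov x$ in $U$ and $V$ — including the interaction with the fixed idempotent-free image computed by $\pi_{A,\Gam}$ — yields a contradiction for large $\abs g$; the strongly unbalanced refinements (SU1)--(SU3) in \prref{def:sun} are presumably the technical tool that makes this monotone comparison of $\del_x$ over prefixes go through cleanly, so I would reduce \prref{def:numberedsunny} to \prref{def:sun} as a preliminary normalization before carrying out the length estimate.
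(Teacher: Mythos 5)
Your overall architecture (normalize the unbalanced equation to a strongly unbalanced typed one, bound the group image of $X$, then enumerate finitely many candidate values and decide each resulting idempotent-variable system by \prref{thm:marie}) matches the paper's, and your second step together with the complexity bookkeeping is sound. The gap is in the first step, exactly where you yourself predicted trouble: the claimed polynomial bound on $\abs{g}$ for $g=\pi(\sig(X))$ cannot be obtained by applying \prref{lem:bapra} with ``a suitable primitive cyclically reduced word $p$ built from $g$''. The counting lemma \prref{lem:pricr} yields linear growth $\del_p(p^k)=k$ only when the group image is a power of a primitive cyclically reduced word; for an arbitrary reduced word $g$ there is in general no single $p$ for which $\del_p(g)$ grows with $\abs{g}$, so no contradiction for long $g$ follows from the counting lemmas alone. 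Worse, no argument confined to the underlying group equation can bound $\abs{g}$ at all: unbalancedness constrains the idempotent (tree) part of a solution, while group equations such as $Xa=aX$ have solutions $a^k$ of unbounded length, and condition~2 of \prref{def:numberedsunny} does not exclude such infinite families.

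The missing ingredient is the structure theorem of Appel and Lorents, in the effective form of Bormotov, Gilman and Myasnikov \cite{Appel68,Lor68,BormotovGM09}: since condition~2 guarantees that the underlying one-variable group equation is not a tautology, its solution set is a finite union of sets $\set{rq^ks}{k\in\Z}$ with $q$ cyclically reduced and $\abs{rqs}\in\Oh(n)$. The paper's proof of \prref{thm:frida} enumerates the candidates for $r$ and $s$, substitutes $x\mapsto rxs$ so that without restriction the group image is $q^k$, treats $k=0$ (which covers $q=1$) separately via \prref{thm:marie}, and only then, for $q$ primitive and cyclically reduced, runs the $\del_p$-counting argument: \prref{lem:gunnar} is stated and proved exclusively for solutions of the shape $\sig(x)=(\pref(p^k),p^k)$, and the strongly unbalanced conditions (SU1)--(SU3) supply the prefix comparisons that bound $\abs{k}\leq 6m\abs{q}$, where $m$ is the (polynomially enlarged) system size. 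The resulting $\Oh(n^6)$ length bound on candidate values is thus a consequence of the periodic parametrization, not its starting point. Your preliminary normalization from \prref{def:numberedsunny} to \prref{def:sun} is indeed what the paper does in the proof of \prref{thm:lea} (the case analysis with the substitutions $\theta$, $\tau$, $\tau'$); to repair your proposal, insert the structure theorem between that normalization and the counting estimate, after which the remainder of your argument goes through essentially as in the paper.
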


\begin{proof}
Suppose that $(U,V)$ is unbalanced. The pair $(U,V)$ must then contradict one of the three conditions of Remark \ref{rem:remsein}. Let us distinguish cases 
and, in each case, reduce the given unbalanced equation  into a 
{\em strongly} unbalanced {\em typed} equation.\\
In all cases, we introduce a fresh idempotent variable $Z$, a fresh reduced variable $x$, and use the word-substitutions $\tau'$ (or $\tau $) defined in Lemma \ref{lem:taueqtoreq}: $\tau'(X)=xZ,\tau'(\ov{X})= \ov{Z}\ov{x},\tau(X)=Zx,\tau(\ov{X})= \ov{x}\ov{Z}$
or the trivial substitution  $\theta(X)=x,\theta(\ov{X})= \ov{x}$.\\
{\bf Case 1}: $\delta_X(U) \neq \delta_X(V)$.\\
In this case $(\theta(U),\theta(V))$ fulfills condition $(SU1)$.\\
{\bf Case 2}: $\max\{ \delta_X(U') \mid U' \leq U\} > \max\{ \delta_X(V') \mid
  V' \leq V \}$.\\
There is some prefix 
 $U' \leq U$ such that for all prefixes $V' \leq V$ we have
 $\delta_X(U') > \delta_X(V')$ and,
in particular,  $\delta_X(U') >
  \delta_X(1) = 0$. 
  We choose $\delta_X(U')$ to be maximal
and, since $\delta_X(U')$  is positive, we may
choose $U'$ such that  $X = \last(U')$, 
so that   $\last(\tau'(U'))=Z$. 
Now,  for every $z \in \{Z,1\}$,
\begin{eqnarray*}
\dip {\tau'(U')} {x} = \delta_X(U') &>& \max\set{\delta_X(V')}{V' \leq V}\\
&= & \max\set{\dip {W} {x}}{W \leq \tau'(V)}\\
& \geq &\max\set{\dip {W'z} {x}}{W'z \leq \tau'(V)}.
\end{eqnarray*}
This prefix $\tau'(U')$ shows that $(\tau'(U),\tau'(V))$ fulfills condition $(SU2)$
(this is actually a stronger 
 requirement than asked by \prref{def:sun}, because this  single
 prefix $\tau'(U')$ serves for all $W'z$). \\
{\bf Case 2'}: $\max\{ \delta_X(U') \mid U' \leq U\} < \max\{ \delta_X(V') \mid
  V' \leq V \}$.\\
By Case 2 the typed equation $(\tau'(V),\tau'(U))$ fulfills condition $(SU2)$.\\
{\bf Case 3}: $\min\{ \delta_X(U') \mid U' \leq U\} > \min\{ \delta_X(V') \mid
  V' \leq V \}$.\\
We may assume that $\delta_X(U) = \delta_X(V) = k$. If $U = U'U''$ and
$V = V'V''$, we have 
$\delta_{\ov{X}}(\ov{U''}) = \delta_X(U'') = k-\delta_X(U')$ and
$\delta_{\ov{X}}(\ov{V''}) = \delta_X(V'') = k-\delta_X(V')$, thus
$(\ov{U},\ov{V})$ fulfills that
$\max\set{\dip{U'}{\ov X }}{U' \leq \ov{U}} < 
\max\set{\dip{V'}{\ov X}}{V' \leq \ov{V}}$.\\
By a reasoning similar to that of case 2, one can show that
$(\tau(\ov{V}),\tau(\ov{U}))$
fulfills condition $(SU3)$.\\
{\bf Case 3'}: $\min\{ \delta_X(U') \mid U' \leq U\} < \min\{ \delta_X(V') \mid
  V' \leq V \}$.\\
By Case 3 the typed equation 
$(\tau(\ov{U}),\tau(\ov{V}))$ 
fulfills condition 
$(SU3)$.\\
We have thus reduced \prref{thm:lea} above to  
\prref{thm:frida} below. 
\end{proof}

\begin{theorem}\label{thm:frida}
The following problem can be decided in \DEXPTIME.

Input: A system $\cS$ of typed equations 
with at most one reduced variable (\ie $\Gam = \os{x , \ov x}$) where at least one equation $(U,V)\in \cS$ is strongly unbalanced.  

Question: Does $\cS$ have a solution in $\FIM A$?
\end{theorem}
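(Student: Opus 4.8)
The plan is to reduce to \prref{thm:marie} by showing that the unique reduced variable can take values only in a set of polynomially many candidate words, each of polynomial length; fixing $x$ to such a candidate turns $\cS$ into a system in idempotent variables, to which \prref{thm:marie} applies. If $x$ does not occur we are already in that situation, so assume $x$ occurs and let $(U,V)\in\cS$ be the strongly unbalanced equation. First I would pass to the group projection: since $\pi_{A,\Gam}(U)\neq \pi_{A,\Gam}(V)$ in $\FGRI{A\cup\Gam}$, for any solution $\sig$ the word $w=\sig(x)$ must solve the non-trivial one-variable equation obtained by reading $\pi_{A,\Gam}(U)=\pi_{A,\Gam}(V)$ over $\FGRI A$ with $x$ as the single unknown. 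By the classical structure theory of one-variable equations over free groups, the set of such $w$ is an effectively computable finite union of sporadic words of polynomial length together with parametric families $w=\alpha\beta^{n}\gamma$ with $n\in\Z$ and $\alpha,\beta,\gamma$ of polynomial length; after cyclically reducing and extracting a primitive root we may assume the period is $q^{k}$ with $q$ primitive and cyclically reduced. It therefore suffices to bound the exponent $n$ inside each family.

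Bounding $n$ is the heart of the argument, and it is exactly here that the three conditions of \prref{def:sun} are used. Fix a family and apply the difference function $\del_q$, using that it is an almost-homomorphism (\prref{eq:baprpr}) and that $\del_q(q^{n})=n$ (\prref{lem:pricr}); hence $\del_q(w)=kn+O(1)$, and for every prefix $U'\leq U$ one gets $\del_q(\pi(\sig(U')))=k\,\dip{U'}{x}\,n+E$ with an error $E$ polynomial in $\Abs{\cS}$, because idempotent variables vanish in $\FGRI A$. In case $(SU1)$ I would apply $\del_q$ to the two equal reduced words $\sig(U)$ and $\sig(V)$; collecting the $w$-contributions gives $(\dip{U}{x}-\dip{V}{x})\,\del_q(w)=O(\Abs{\cS})$, and since $\dip{U}{x}\neq\dip{V}{x}$ this forces $|n|$ to be polynomially bounded. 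In cases $(SU2)$ and $(SU3)$ the net exponents agree and the group equation alone no longer bounds $n$; here I would exploit the equality $P_{\sig(U)}=P_{\sig(V)}$ of prefix-closed sets in $\FIM A$. Condition $(SU2)$ ensures that for every attachment point $V'z$ of $V$ (with $z\in\OO\cup\os 1$) there is an attachment point $U'z$ of $U$ carrying the \emph{same} symbol $z$ and satisfying $\dip{U'}{x}>\dip{V'}{x}$; translating through $\del_q$, the subtree $\sig(z)$ --- which is common to both sides and hence cancels --- is anchored in $\sig(U)$ at a $q$-coordinate at least $n$ higher than in $\sig(V)$. Choosing the attachment that realises the maximal value of $\del_q$ on $P_{\sig(V)}$ then yields $\max\set{\del_q(r)}{r\in P_{\sig(U)}}>\max\set{\del_q(r)}{r\in P_{\sig(V)}}$ as soon as $n$ exceeds a polynomial threshold, contradicting $P_{\sig(U)}=P_{\sig(V)}$; thus $|n|$ is polynomially bounded. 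Case $(SU3)$ is the same argument applied to the involuted equation, i.e.\ in the $\ov x$-direction with $\ov q$.

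With $n$ bounded in every family and only finitely many sporadic solutions, I obtain polynomially many candidate words $w$, each of polynomial length. For each I substitute $x\mapsto w$, $\ov x\mapsto\ov w$ throughout $\cS$, producing a polynomial-size system $\cS_w$ in idempotent variables only, and $\cS$ is solvable in $\FIM A$ \IFF some $\cS_w$ is. Running the $\DEXPTIME$ procedure of \prref{thm:marie} on each of the polynomially many polynomial-size systems $\cS_w$ stays in $\DEXPTIME$, giving the bound. I expect the main obstacle to be the exponent bound for $(SU2)$/$(SU3)$: one has to relate $\del_q$ of a prefix to the maximal reach of the reading walk in the Cayley graph of $\FGRI A$, control the polynomial-size error furnished by \prref{lem:bapra}, and --- the decisive point --- observe that an idempotent subtree hanging at a common symbol $z$ contributes identically on both sides and cancels, so that only the $\dip{\cdot}{x}$-gap amplified by the period survives.
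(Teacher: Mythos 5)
Your proposal is correct and follows essentially the same route as the paper: it uses the Appel--Lorents/Bormotov--Gilman--Myasnikov parametrization $rq^k s$ of the one-variable group solutions, bounds the exponent via the almost-homomorphism $\del_q$ (\prref{lem:bapra}, \prref{lem:pricr}) with the case split on (SU1)/(SU2)/(SU3), and then substitutes the polynomially bounded candidates to reduce to \prref{thm:marie}. In particular your ``decisive point''---that an idempotent subtree anchored at a common symbol $z$ contributes identically on both sides, so the maximal $\del_q$-value over the common prefix set forces the exponent bound---is exactly the mechanism of the paper's \prref{lem:gunnar}, including its subcase distinction $z \in \OO$ versus $z = 1$.
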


The proof of \prref{thm:frida} relies on the following combinatorial observation. 

\begin{lemma}\label{lem:gunnar}
Let $(U,V)$ be a  strongly unbalanced equation with 
$U,V \in (A\cup \OO \cup \os{x , \ov x})^*$ and $n = \max\os{\abs{U}, \abs{V}}$.
Let $k \in \Z$ be an integer and  $\sig$ be a solution to $(U,V)$ such that $\sig(x) 
= (\pref(p^k), p^k)$ for some nonempty cyclically reduced word $p\in A^*$. Then we have $\abs k \leq 6 n \abs p$.
\end{lemma}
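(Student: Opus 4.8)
The plan is to extract from the equation $\psi(\sig(U))=\psi(\sig(V))$ numerical invariants via the difference function $\del_p$ and to squeeze $k$ between two estimates coming from $U$ and from $V$. First I would reduce to the case that $p$ is primitive: if $p=r^m$ with $r$ primitive and cyclically reduced, then $\sig(x)=p^k$ is the reduced word $r^{mk}$, so the primitive case applied to $r$ with exponent $mk$ gives $\abs{mk}\le 6n\abs r$, whence $\abs k\le 6n\abs r/m\le 6n\abs p$. So assume $p$ is primitive and cyclically reduced; then \prref{lem:pricr} gives $\del_p(p^k)=k$ and $\del_p(\ov p^{\,\abs k})=-k$.

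The central estimate is the following. For any prefix $U'$ of $U$ the reduced word $\pi(\sig(U'))$ is, in $\FA$, the product of the reduced pieces obtained by dropping the idempotent variables (which map to $1$ in the group), replacing each $x$ by $p^k$ and each $\ov x$ by $\ov p^{\,\abs k}$, and keeping the constant letters. There are at most $\abs{U}\le n$ such pieces, each constant piece contributes at most $1$ in absolute value to $\del_p$ while the variable pieces contribute $+k$ or $-k$; hence by \prref{lem:bapra} (in the form \prref{eq:baprpr})
\begin{equation*}
\abs{\del_p(\pi(\sig(U')))-k\,\dip{U'}{x}}\le E,\qquad E:=n+3(\abs p-1)(n-1),
\end{equation*}
and the analogous bound holds for every prefix of $V$. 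Taking $U'=U$ and $V'=V$ and using $\pi(\sig(U))=\pi(\sig(V))$ gives $\abs{k(\dip Ux-\dip Vx)}\le 2E$. In case (SU1) we have $\dip Ux\neq\dip Vx$, so $\abs{\dip Ux-\dip Vx}\ge 1$ and therefore $\abs k\le 2E\le 6n\abs p$, which settles this case.

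For (SU2) and (SU3) the group component no longer separates $U$ and $V$, and I would instead exploit the equality $P_U=P_V=:P$ of the idempotent (first) components. Since the involution on $A$ is fixed-point-free the Cayley graph of $\FA$ is a tree, so $P$ is exactly the connected set of vertices visited by the walk reading $\sig(U)$ (equivalently $\sig(V)$), and I set $M:=\max_{w\in P}\del_p(w)$. On the $U$-side, choosing via (SU2) a prefix $U'$ ending at the same $z\in\OO\cup\os 1$ as a $\del_x$-maximal designated prefix of $V$ gives a visited vertex $\pi(\sig(U'))\in P$ with $\del_p(\pi(\sig(U')))\ge k\,\dip{U'}x-E\ge k(m_V+1)-E$, where $m_V=\max\set{\dip{V'}x}{V'z\le V,\ z\in\OO\cup\os 1}$; hence $M\ge k(m_V+1)-E$ for $k>0$. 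For the matching upper bound $M\le k\,m_V+E'$ I would argue that along the $V$-walk $\del_p$ increases only inside $x$-runs (reaching its spine maximum at a symbol boundary, estimated by $k\,\dip{V'}x+E\le k\,m_V+E$) and inside idempotent subtrees, and that the ``height'' $\max_{t\in P_Z}\del_p(t)$ of each idempotent subtree is the same constant wherever $Z$ is rooted, so that the subtree contributions to $M$ on the two sides are synchronized by the condition that the matched prefixes end at the \emph{same} $z\in\OO\cup\os 1$. Combining the two bounds yields $k\le E+E'\le 6n\abs p$; the case $k<0$ is symmetric, and (SU3) is handled identically after passing to $\ov p$, $\ov x$, using $\del_p(\ov w)=-\del_p(w)$.

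The main obstacle is precisely this last upper bound on $M$: a priori an idempotent substitution $\sig(Z)$ is an arbitrarily long loop that may reach very far in the $p$-direction, so $M$ need not be attained on the spine. The point to make rigorous is that the \emph{same} subtree $P_Z$ is attached in the $U$- and $V$-expansions, its height being independent of the root, and that the synchronization built into (SU2)/(SU3)---matching prefixes terminating at the identical $z\in\OO\cup\os 1$---makes these subtree contributions cancel in the comparison of the two estimates for $M$, leaving only the spine terms, which are controlled by \prref{lem:bapra}.
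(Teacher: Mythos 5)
Your skeleton coincides with the paper's: reduction to a primitive cyclically reduced $p$, the uniform estimate $\abs{\dip{\pi(\sig(U'))}{p} - k\,\dip{U'}{x}} \le E$ with $E = n + 3(\abs p -1)(n-1)$ obtained from \prref{lem:bapra} and \prref{lem:pricr}, the complete treatment of case (SU1) by comparing the group components of $\sig(U)$ and $\sig(V)$, and, for (SU2)/(SU3), the idea of exploiting the common idempotent component $\sig_1(U)=\sig_1(V)$ and its $\del_p$-maximum $M$. Your case (SU1) is correct and is essentially the paper's argument.

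In the main cases (SU2)/(SU3), however, there is a genuine gap, which you partly acknowledge yourself: the displayed upper bound $M \le k\,m_V + E'$ is false in general, because the maximum can be attained deep inside an idempotent subtree $\sig_1(z)$, whose $\del_p$-height is in no way bounded in terms of $n$, $\abs p$ and $k$; and your lower bound $M \ge k(m_V+1)-E$ uses only a spine vertex, so it contains no subtree term against which that unbounded contribution could cancel. Your synchronization is also anchored at the wrong prefix: you invoke (SU2) at a $\del_x$-maximal prefix of $V$, whereas the subtree containing the maximum may be attached at an entirely different prefix, and then nothing ties your two estimates together. The paper closes exactly this hole: pick $w$ with $m=\dip{w}{p}$ maximal over $\sig_1(V)$, take the \emph{minimal} prefix $V'=V''z$ of $V$ with $w \in \sig_1(V')$ (so minimality forces $w=\sig_2(V'')z'$ for some $z' \in \sig_1(z)$), and obtain the upper bound $m \le k\,\dip{V'}{x} + \dip{z'}{p} + E$, which explicitly carries the uncontrolled term $\dip{z'}{p}$. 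Then (SU2) is applied to \emph{this} $V'z$, yielding $U'=U''z$ with $\dip{U'}{x} > \dip{V'}{x}$, and the comparison vertex is $u := \sig_2(U'')z' \in \sig_1(U)$ with the \emph{same} $z'$, giving $k\,\dip{U'}{x} + \dip{z'}{p} - E \le \dip{u}{p} \le m$; subtracting, $\dip{z'}{p}$ cancels and $k \le k(\dip{U'}{x}-\dip{V'}{x}) \le 2E \le 6n\abs p$. So the step you defer as "to make rigorous" is not a routine verification but the crux of the lemma, and as you set it up (anchoring at $m_V$, bounding $M$ by spine terms alone) the two estimates do not combine; the fix is to let the location of the $\del_p$-maximum dictate the prefix at which (SU2) is invoked, and to bound $\dip{u}{p}$ rather than $M$ itself.
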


\begin{proof}
 Without restriction,  $p$ is primitive and $k >1$. (Replace $p$ by its primitive root and interchange the role of $p$ and $\ov p$, if necessary.) For a word $W \in  (A\cup \OO \cup \os{x , \ov x})^*$ 
 we write $\sig(W) = (\sig_1(W), \sig_2(W))$ where 
 $\sig_1(W) \sse A^*$ is a prefix-closed set of reduced words
  and $\sig_2(W) \in \FA$.
 Choose $m$ maximal such that 
 $\dip w p =m $ for some $w \in \sig_1(V)$. We fix $w \in A^*$ and we observe
 that we have $m \geq 0$ and for every word 
 $u \in  \sig_1(U) = \sig_1(V)$, we have
\begin{equation}\label{eq:m_is_maximum}
 \dip u p \leq m 
\end{equation} 
{\bf Case (SU2):} $(U,V)$ fulfills condition (SU2).\\
We choose  a prefix $V'$ of $V$ of minimal length with respect to
the property $w \in \sig_1(V')$. 
We consider two subcases.
\smallskip

\noindent
\underline{Subcase $\OO$:} $\last(V') \in \OO$.

\smallskip

\noindent
Let $z:= \last(V')$.
The word $V'$ thus decomposes as $V' = V''z$.
Since $\sig_1(V''{z}) =
 \sig_1(V'') \cup \sig_2(V'')\sig_1(z)$, it follows from the minimality
 of $V''z$ that $w \in \sig_2(V'')\sig_1(z)$. Since $\sig_2(Z) = 1$
 for every $Z \in \OO$ and $|V'| \leq n-1$, it follows that $w$ is the
 product of at most $n-1$ reduced words $v_1\ldots
 v_t$ in $A \cup \{ \sig_2(x), \sig_2(\ov x) \}$ by some $z'\in \sigma_1(z).$\\

For each letter $a$ of $A$,  $\delta_p(a)\leq 1$ and, since $p$ is primitive, by \prref{lem:pricr},
$\delta_p(\sig_2(x)) = k, \delta_p(\sig_2(\ov x)) = -k$. We thus  get
\begin{equation}\label{eq:sumdelta_upperbound}
\sum_{i=1}^t \delta_p(v_i) \leq k\delta_x(V') +n-1.
\end{equation}
Since $w = v_1\ldots v_tz'$, we obtain the following upper bound: 
\begin{eqnarray}\label{eq:m_upperbound}
m = &\dip w p& \nonumber\\
\leq & \sum_{i=1}^t \delta_p(v_i)+ \dip {z'} p +  3 (\abs p -1)(n-1) &\mbox { by } (\ref{eq:baprpr})\nonumber\\
\leq &   k \dip {V'} x + \dip {z'} p + n -1 + 3 (\abs p -1)(n-1)&\mbox { by } (\ref{eq:sumdelta_upperbound})
\end{eqnarray}

By (SU2) there exists a prefix $U'$ of $U$
such that $\dip {U'} x > \dip {V'} x$ and $\last(U')=z$. The word $U'$ thus decomposes as $U' = U''z$.
Let us define $u := \sig_2(U'')z'$. We remark that $u \in \sig_1(U)$, hence it fulfills 
\prref{eq:m_is_maximum}.
Using similar arguments based on \prref{eq:baprpr} and \prref{lem:pricr} we obtain:
\begin{equation}\label{eq:m_lowerbound}
 k \dip {U'}  x + \dip {z'} p - (n-1) - 3 (\abs p -1)(n-1)\leq \dip u p  .
\end{equation}
Combining the above inequalities we obtain:
\begin{eqnarray*}\label{eq:k_upperbound}
k \leq & k(\dip {U'}  x - \dip {V'}  x) &\mbox{\hspace{-2cm} since } \dip {U'}  x > \dip {V'}  x \nonumber\\
 \leq & -\dip {z'} p + (n-1) + 3 (\abs p -1)(n-1)+ \dip u p- k\dip {V'}  x &\mbox{ by } 
(\ref{eq:m_lowerbound}) \nonumber\\
 \leq & -\dip {z'} p + (n-1) + 3 (\abs p -1)(n-1)+ m - k\dip {V'}  x &\mbox{ by } (\ref{eq:m_is_maximum})\nonumber\\
 \leq & 2(n-1) + 6 (\abs p -1)(n-1)&\mbox{ by } (\ref{eq:m_upperbound})\nonumber\\
 \leq & 6 n(\abs p)&
\end{eqnarray*}
\smallskip

\noindent
\underline{Subcase 1:} $\last(V') \notin \OO$.

\smallskip

\noindent
We just need to perform some adaptations to the preceding case. 
The word $w$ is the
 product of at most $n$ reduced words $v_1\ldots
 v_t$ in $A \cup \{ \sig_2(x), \sig_2(\ov x) \}$, 
and by similar methods we obtain
\begin{equation}\label{eq:newm_upperbound}
m = \dip w p \leq  k \dip {V'} x + n + 3 (\abs p -1)(n-1).
\end{equation}

By (SU2) (where we choose $z := 1$) there exists a prefix $U'$ of $U$
such that $\dip {U'} x > \dip {V'} x$. Let us define $u := \sig_2(U')$. We get
\begin{equation}\label{eq:newm_lowerbound}
 k \dip {U'}  x  - n - 3 (\abs p -1)(n-1)\leq \dip u p  .
\end{equation}
Since $u=\sig_2(U') \in \sigma_1(U)$, here also $u$ fulfills (\ref{eq:m_is_maximum}). Hence, 
putting (\ref{eq:newm_lowerbound}) (\ref{eq:m_is_maximum}) and (\ref{eq:newm_upperbound}) together we obtain the desired result: 
\begin{equation}\label{eq:newk_upper_bound}
k \leq k(\dip {U'}  x - \dip {V'}  x) \leq 6 (\abs p - 1)(n-1) + 2n \leq 
6 n \abs p . 
\end{equation}

\noindent {\bf Case (SU3):} $(U,V)$ fulfills condition (SU3).\\
This case is dealt with in a similar manner.

\noindent {\bf Case (SU1):} $(U,V)$ fulfills condition (SU1).\\
By symmetry in $U$ and $V$,  we may assume 
without restriction $\dip U x  >  \dip V x$.
Let us  choose $V' := V,w := \sigma_2(V), m := \dip w p, U' := U, u := \sigma_2(U)$. 
The arguments of Case (SU2), Subcase 1, apply on these choices for
$V',w, m, U', u $.\\ 
(In fact, an argument provided by James Howie in
 \cite{Silva99} shows that in this case the solution $\sigma(x)$ is unique).
\end{proof}

{\noindent \bf Proof of \prref{thm:frida}.}
 Let $n$ be the size of the system $\cS$, it is defined as
 $$\Abs \cS = \sum_{(U,V) \in \cS}\abs {UV}.$$
Since $\pi_{A,\Gam}(U) \neq \pi_{A,\Gam}(V)$ for at least one equation in the system, the set of solutions for the underlying group equations is never equal to $\FGRI A$. 
 By \cite{Appel68,Lor68}, the set of
solutions of a one-variable free group equation is therefore a finite union of sets of the form
\beq
\label{boupow7}
\set{ rq^ks }{ k \in \Z },
\eeq
where $q$ is cyclically reduced and both products
$rqs$ and $r\ov q s$ are reduced. A self-contained proof
of this fact has been given in \cite{BormotovGM09}. 

In the description above  $q=1$ is possible. Moreover, \cite{BormotovGM09} shows
$\abs{rqs} \in \Oh(n)$. Hence, as we aim for \DEXPTIME{} there is time enough to consider all possible 
candidates for $r$ and $s$. This means we can fix $r$ and $s$; and it is enough to  consider a single set $S= \set{ rq^ks }{ k \in \Z }$, only. Next we replace in $\cS$ all occurrences of $x$ by 
$r xs$ (and $\ov x$ by 
$\ov s \, \ov x\,  \ov r$). This leads to a new system 
which we still denote by $\cS$ and without restriction we have $S= \set{q^k}{ k \in \Z }$. The new size $m$ of $\cS$ is at most quadratic in $n$.

Now, we check 
if $k =0$ leads to a solution of $\cS$. This means that we simply 
cancel $x$ and $\ov x$ everywhere. We obtain a system over idempotent 
variables; and we can check satisfiability by \prref{thm:marie}. 
Note that this includes the case $q=1$. Thus, henceforth we may assume that $q$ is a primitive cyclically reduced word. By \prref{lem:gunnar} 
we see that it is enough to replace $S$ by
$S'= \set{q^k}{ \abs k \leq 6 m \abs q }$.  
Since $\abs q \in \Oh(m)$ we obtain a cubic bound for the maximal 
length of words in $S'$, this means the length of each word in 
$S'$ is bounded by $\Oh(n^6)$. This is small enough to check 
satisfiability of the original system $\cS$  in \DEXPTIME{} by \prref{thm:marie}.
\qed

\subsection*{Conclusion and directions for future research}
The notion of ``idempotent variable'' unifies the approach for studying equations 
in free inverse monoids. As the general situation is undecidable, progress is possible only by improving complexities in classes where decidability is known and/or to enlarge the class of equations where decidability is possible. We achieved progress in both fields. For equations in idempotent variables we lowered the complexity down to 
$\DEXPTIME$ and proved that this bound is tight.
Using a recent result in \cite{DiekertJP2014csr} that it is decidable 
in $\PSPACE$ whether an equation in free groups has only finitely many solutions, 
we derived  a ``promise result'' in \prref{cor:mainsota} with  triple exponential 
time complexity. We don't think that this is optimal, because we believe that
solving equations in free groups is in $\NP$. But this fundamental  conjecture is wide open and resisted all known techniques. 

More concretely, let us resume some interesting and specific problems on equations in free inverse
monoids which are open:
\begin{itemize}
\item Is the decision problem in \prref{thm:marie} restricted to  single equation  in idempotent variables \DEXPTIME-hard? We conjecture: yes! 
\item Is the (other) special kind of equations solved by Theorem~23 of \cite{DMS07ijac}
also solvable in \DEXPTIME?
\item Is it possible to remove Assumption~2 in \prref{def:numberedsunny},  and still
maintain decidability of the system of equations? (The assumption asserts that the image of the left-hand side and right-hand side are different in the free group.)
 \item What happens if the underlying equation in the free group is true for all 
elements in the free group? This means the statement is a tautology  the free group.
\item What {\em more general} kinds of one-variable equations in the free inverse
monoid are algorithmically solvable (possibly all of them)?
\item Does Je\.z' recompression technique apply to language equations? If yes, then this would open a new approach to tackle equations over free inverse monoids. 
\end{itemize}

\subsection*{Acknowledgements}
Florent Martin acknowledges support from Labex CEMPI (ANR-11-LABX-0007-01) and SFB 1085 Higher invariants. 
Pedro Silva acknowledges support from:
CNPq (Brazil) through a BJT-A grant (process 313768/2013-7);
and
the European Regional
Development Fund through the programme COMPETE
and the Portuguese Government through FCT (Funda\c c\~ao para a Ci\^encia e a Tecnologia) under the project PEst-C/MAT/UI0144/2013.
Volker Diekert thanks the hospitality of Universidade Federal da Bahia, Salvador Brazil, where 
part of this work started in Spring 2014. 

The authors are thankful to the program committee of CSR 2015 for awarding the conference version of this paper with a Yandex-best-paper award; and one of the authors is even more thankful for the memorable event of Computer Science in Russia 2015 which was held at the shores of a truly magnificent Lake Baikal.

\newcommand{\Ch}{Ch}\newcommand{\Yu}{Yu}\newcommand{\Zh}{Zh}\newcommand{\St}{St}
\newcommand{\curlybraces}[1]{\{#1\}}

\end{document}